\PassOptionsToPackage{dvipsnames}{xcolor}

\documentclass[manuscript,screen,nonacm]{acmart}

\AtBeginDocument{%
	\providecommand\BibTeX{{%
			\normalfont B\kern-0.5em{\scshape i\kern-0.25em b}\kern-0.8em\TeX}}}

\acmJournal{TOCT}

\bibliographystyle{ACM-Reference-Format}%

\usepackage{amsmath,amsthm}
\usepackage{cleveref}

\usepackage[utf8]{inputenc}
\usepackage{amsfonts}
\usepackage{hyperref}
\usepackage{tikz}
\usetikzlibrary{calc,intersections,decorations.pathmorphing,patterns}
\usepackage{graphicx}
\usepackage{bussproofs}
\usepackage{xcolor}
\usepackage{mathtools}
\usepackage[retainorgcmds]{IEEEtrantools}
\usepackage{relsize}
\usepackage[inline]{enumitem}
\usepackage{thmtools, thm-restate}

\theoremstyle{acmplain}
\newtheorem{theorem}{Theorem}[section]
\newtheorem{claim}[theorem]{Claim}
\newtheorem{observation}[theorem]{Observation}

\newcommand*{\defeq}{\stackrel{\mathsmaller{\mathsf{def}}}{=}}
\renewcommand{\restriction}{\mathord{\upharpoonright}}

\newcommand{\xor}{\text{xor}}
\newcommand{\xorl}{\text{$\xor_l$}}
\newcommand{\QParity}{\text{QParity}}
\newcommand{\LQParity}{\text{LQParity}}
\newcommand{\CR}{\text{CR}}

\newcommand{\KBKF}{\text{KBKF}}
\newcommand{\KBKFlq}{\text{KBKF-lq}}

\newcommand{\Res}{\text{Res}}
\newcommand{\CP}{\text{CP}}
\newcommand{\MRes}{\text{MRes}}

\newcommand{\QRes}{\text{QRes}}
\newcommand{\QURes}{\text{QURes}}
\newcommand{\LDQRes}{\text{LD-QRes}}
\newcommand{\rLDQRes}{\text{rLD-QRes}}
\newcommand{\Exp}{\text{Exp}}
\newcommand{\Red}{\text{Red}}
\newcommand{\ExpRes}{\forall\Exp+\Res}
\newcommand{\QCP}{\CP+\forall\Red}
\newcommand{\IR}{\text{IR}}
\newcommand{\IRM}{\text{IRM}}

\newcommand{\var}{\text{var}}
\newcommand{\vars}{\text{vars}}
\newcommand{\width}{\text{width}}

\newcommand{\res}{\text{res}}

\newcommand{\leaves}{\text{leaves}}
\newcommand{\UsedConstraints}{\text{UC}}
\newcommand{\UsedConstraintInd}{\text{{\sc UCI}}}

\DeclarePairedDelimiter{\card}{\lvert}{\rvert}

\usetikzlibrary{arrows}
\usetikzlibrary{decorations.markings}
\usetikzlibrary{calc}

\tikzstyle{download} = [color = gray]
\tikzstyle{axiom} = []
\tikzstyle{inference} = [rounded corners = 0.15cm]
\tikzstyle{conclusion} = []

\tikzstyle{calculus} = [align = center, minimum width=5em, minimum height = 3em, rounded corners = 0.15cm]

\tikzstyle{nicenode} = [draw, align = center, rounded corners = 0.1cm]
\tikzstyle{softrect} = [draw, thick, gray, rounded corners = 0.15cm]

\tikzset{
  on each segment/.style={
    decorate,
    decoration={
      show path construction,
      moveto code={},
      lineto code={
        \path [#1]
        (\tikzinputsegmentfirst) -- (\tikzinputsegmentlast);
      },
      curveto code={
        \path [#1] (\tikzinputsegmentfirst)
        .. controls
        (\tikzinputsegmentsupporta) and (\tikzinputsegmentsupportb)
        ..
        (\tikzinputsegmentlast);
      },
      closepath code={
        \path [#1]
        (\tikzinputsegmentfirst) -- (\tikzinputsegmentlast);
      },
    },
  },
  mid arrow/.style={postaction={decorate,decoration={
        markings,
        mark=at position .5 with {\arrow[#1]{stealth}}
      }}},
}

\begin{document}
\title{Hard QBFs for Merge Resolution}
\titlenote{A preliminary version of this article appeared in the proceedings of the 40th IARCS Annual Conference on Foundations of Software Technology and Theoretical Computer Science -- FSTTCS 2020 \cite{BBMPS-FSTTCS20}.}

\author{Olaf Beyersdorff}
\orcid{0000-0002-2870-1648}
\email{olaf.beyersdorff@uni-jena.de}
\affiliation{%
	\institution{Institut f\"{u}r Informatik, Friedrich-Schiller-Universit\"{a}t Jena}
	\streetaddress{Ernst-Abbe-Platz 2, 07743}
	\city{Jena}
	\country{Germany}}

\author{Joshua Blinkhorn}
\orcid{0000-0001-7452-6521}
\email{joshua.blinkhorn@uni-jena.de}
\affiliation{%
	\institution{Institut f\"{u}r Informatik, Friedrich-Schiller-Universit\"{a}t Jena}
	\streetaddress{Ernst-Abbe-Platz 2, 07743}
	\city{Jena}
	\country{Germany}}

\author{Meena Mahajan}
\orcid{0000-0002-9116-4398}
\email{meena@imsc.res.in}
\affiliation{%
	\institution{The Institute of Mathematical Sciences  (CI of Homi Bhabha National Institute)}
	\streetaddress{IV Cross Road, CIT Campus, Taramani}
	\city{Chennai}
	\country{India}}

\author{Tom\'{a}\v{s} Peitl}
\orcid{0000-0001-7799-1568}
\email{peitl@ac.tuwien.ac.at}
\affiliation{%
	\institution{Institute of Logic and Computation, TU Wien}
	\streetaddress{Favoritenstraße 9-11, 1040}
	\city{Vienna}
	\country{Austria}}

\author{Gaurav Sood}
\authornote{Current affiliation: Department of Computer Science, University of Haifa, Israel}
\orcid{0000-0001-6501-6589}
\email{gaurav.sood.work@gmail.com}
\affiliation{%
	\institution{The Institute of Mathematical Sciences  (CI of Homi Bhabha National Institute)}
	\streetaddress{IV Cross Road, CIT Campus, Taramani}
	\city{Chennai}
	\country{India}}
	
\renewcommand{\shortauthors}{O.~Beyersdorff, J.~Blinkhorn, M.~Mahajan, T.~Peitl, and G.~Sood}	

\begin{abstract}
	We prove the first genuine QBF proof size lower bounds for the proof system Merge Resolution (MRes \cite{DBLP:journals/jar/BeyersdorffBM21}), a refutational proof system for prenex quantified Boolean formulas (QBF) with a CNF matrix. Unlike most QBF resolution systems in the literature, proofs in MRes consist of resolution steps \emph{together} with information on countermodels, which are syntactically stored in the  proofs as merge maps. As demonstrated in  \cite{DBLP:journals/jar/BeyersdorffBM21}, this makes MRes quite powerful: it has strategy extraction by design and allows short proofs for formulas which are hard for classical QBF resolution systems.
	
	Here we show the first genuine QBF \emph{exponential lower bounds for MRes}, thereby uncovering limitations of MRes. Technically, the results are either transferred from bounds from circuit complexity (for restricted versions of MRes) or directly obtained by combinatorial arguments (for full MRes).
	Our results imply that the MRes approach is \emph{largely orthogonal to other QBF resolution models} such as the QCDCL resolution systems QRes and QURes and the expansion systems $\ExpRes$ and $\IR$.
\end{abstract}

\begin{CCSXML}
	<ccs2012>
	<concept>
	<concept_id>10003752.10003777.10003785</concept_id>
	<concept_desc>Theory of computation~Proof complexity</concept_desc>
	<concept_significance>500</concept_significance>
	</concept>
	</ccs2012>
\end{CCSXML}

\ccsdesc[500]{Theory of computation~Proof complexity}

\keywords{QBF, resolution, proof complexity, lower bounds}

\maketitle

\section{Introduction}
\label{sec:intro}
\emph{Proof complexity} aims to provide a theoretical understanding of the
ease or difficulty of proving statements formally. It also aims to
explain the success stories of, as well as the obstacles faced by,
algorithmic approaches to hard problems such as satisfiability (SAT) and
Quantified Boolean Formulas (QBF) \cite{Bus12,Nordstrom15}. While propositional proof complexity,
the study of proofs of unsatisfiability of propositional formulas, has
been around for decades \cite{CR79,Kra95}, the area of \emph{QBF proof complexity} is 
relatively new, with theoretical studies gaining traction only in the
last decade or so \cite{BWJ-SAT14,BBCP20,BCJ-ToCT-Sep2019,BBMP-ToCL23,M4CQBF}. While inheriting and using a wealth of techniques
from propositional proof complexity \cite{BCS19,BCMS17,KauersS18}, QBF proof complexity has also
given several new perspectives specific to QBF \cite{SlivovskyS16,BBH-LMCS19,Janota-Expansion-vs-QRes-TCS15}, and these
perspectives and their connections to QBF solving \cite{ZM02,PulinaS19,sathandbookqbf,BB23-LMCS} as well as their practical applications \cite{ShuklaBPS19} have driven the search for newer proof systems \cite{HSB14,PeitlSS19,BCJ-ToCT-Sep2019,Balabanov12,LonsingES16}.

Many of the currently known QBF proof systems are built on the best-studied propositional proof system \emph{resolution} \cite{Bla37,Rob65}.  Broadly speaking,
resolution has been adapted to handle and eliminate the universal variables in QBFs
in two intrinsically different ways. The first is an \emph{expansion-based
approach:} universal variables are eliminated  by
implicitly expanding the universal quantifiers into conjunctions,
creating annotated copies of existential variables. Universal variables thus appear in the proofs only in the annotations. The systems
$\ExpRes$, $\IR$, and $\IRM$
\cite{Janota-Expansion-vs-QRes-TCS15,BCJ-ToCT-Sep2019} are of this
type. The second is a \emph{reduction-rule approach:} under certain
conditions, resolution may be blocked, and also under certain
conditions, universal variables can be deleted from clauses. The
conditions are formulated to preserve soundness, ensuring that if a
QBF is true, then so is the QBF resulting from adding a derived
clause.  The systems $\QRes$, $\QURes$, $\QCP$
\cite{KBKF95,Gelder12,BCMS-IC18} are of this type.

A central role in QBF proof complexity is played by the \emph{two-player
evaluation game} on QBFs, and the existence of winning strategies for
the universal player in false QBFs.  For many QBF resolution systems,
such strategies were used to construct proofs and demonstrate
completeness, and soundness was demonstrated by extracting such
strategies from proofs \cite{ELW13,Balabanov12,BCJ-ToCT-Sep2019}.  The \emph{strategy extraction} procedures build
partial strategies at each line of the proof, with the strategies at
the final line forming a complete countermodel. These extraction
procedures are based on the fact that in each application of a rule in
the proof system, any winning strategies of the existential player are
not destroyed.

In the systems $\QRes$ \cite{KBKF95} and $\QURes$ \cite{Gelder12}, the soundness of the resolution rule
is ensured by enforcing a very simple side-condition: variables other
than the resolved variable (referred to henceforth as the pivot) cannot appear in both polarities in the
antecedents. It was observed early on that this is often too
restrictive. The \emph{long-distance resolution proof system} $\LDQRes$
\cite{Balabanov12,ZM02} arose from efforts to have less restrictive but
still sound rules. In this system, a universal variable could appear
in both polarities and get merged in the consequent, provided it was
to the right of the pivot in the quantifier prefix. This preserves
soundness, but the strategy extraction procedures become notably more
complex.

The system $\LDQRes$, while provably better than $\QRes$ \cite{ELW13}, is still
needlessly restrictive in some situations. In particular, by checking
a very simple syntactic prefix-ordering condition, it fails to exploit
the fact that soundness is not lost even if universal variables to the
left of the pivot are merged in both antecedents, provided the partial
strategies built for them in both antecedents are identical.  A \emph{new
system Merge Resolution} (\emph{$\MRes$}) was introduced recently \cite{DBLP:journals/jar/BeyersdorffBM21}
by a subset of the current authors, precisely to address this
point. In $\MRes$, partial strategies are explicitly represented
within the proof, in a particular representation format called merge
maps -- these are essentially deterministic branching programs
(DBPs). In this format, isomorphism checking can be done efficiently,
and this opens the way for enabling sound applications of resolution
that would have been blocked in $\LDQRes$ (and $\QRes$).  In \cite{DBLP:journals/jar/BeyersdorffBM21}, it
was shown that this permitted a simulation of
reductionless $\LDQRes$, denoted $\rLDQRes$, the fragment of $\LDQRes$ where all reductions are postponed to the very end (no reduction step is followed by a resolution step). (This fragment was identified as interesting in \cite{BjornerJK15}; see also \cite{PeitlSS19a}.) More importantly, it was also shown in \cite{DBLP:journals/jar/BeyersdorffBM21} that enabling resolution steps blocked in $\LDQRes$ 
brought a rich pay-off: there are families of
formulas, the Equality and the SquaredEquality formulas, with short (linear-size)
proofs in $\MRes$, even in its tree-like and regular versions, but
requiring exponential size in $\QRes$, $\QURes$, $\QCP$,
$\ExpRes$, and $\IR$. It is notable that the hardness of
Equality (and also Squared Equality) in these systems stems from a certain semantic cost
associated with these formulas and a corresponding lower bound
\cite{BBH-LMCS19, BB-TOCS20}. Thus the results of \cite{DBLP:journals/jar/BeyersdorffBM21}
show that such semantic costs are not a barrier for $\MRes$.

In this paper, we explore the price paid for overcoming the semantic
cost barrier. We show that (expectedly) $\MRes$ is not an unconditional %
success story. Building strategies into proofs via merge maps, and
screening out unsoundness only through isomorphism tests, comes at a
fairly heavy price: exponentially long proofs for various formulas. 

It may be noted that for existentially quantified QBFs, all the QBF proof systems mentioned in this paper coincide with Resolution (or in case of $\QCP$, with Cutting Planes). Therefore lower bounds for these propositional proof systems trivially lift to the corresponding QBF proof system. In particular, the separations of
tree-like and regular $\MRes$ from $\MRes$ and other systems follow directly from the propositional case. However, such lower bounds do not tell us much about the limitations of the QBF proof system other than what is known from the underlying propositional proof system. Therefore, in QBF proof complexity, we are interested in `genuine' QBF lower bounds, i.e.~lower bounds that do not follow from propositional lower bounds (cf.\ \cite{BHP20} on how to formally define the notion of `genuine' lower bounds in many QBF proof systems). The lower bounds we establish here are of this nature. Specifically, we may consider an  $\MRes$ derivation of a line from a given set of lines to be purely propositional if at each step, each merge map  appearing in the consequent line already appears (in an isomorphic form) in at least one of the antecedent lines. The derivation thus does not contribute to building up the strategies. Collapsing such derivations to  single steps (say, by accessing an NP oracle) leaves behind a proof in which purely propositional hardness has been removed. Our arguments show that even such proofs must be large; in this sense, our bounds are genuine QBF lower bounds. 

\begin{figure}[ht]
	\begin{center}
		\begin{tikzpicture}[>=triangle 45,scale=0.94, transform shape]
			
			\newcommand{\TopRank}{4.9}
			\newcommand{\MiddleRank}{3.5}
			\newcommand{\BottomRank}{2.1}
			\newcommand{\MergeTopRank}{4.9}
			\newcommand{\MergeBottomRank}{2.4}
			\newcommand{\MergeFile}{0.6}
			\newcommand{\CentreFile}{6.1}
			\newcommand{\LDQFile}{11.1}
			\newcommand{\MergeOffset}{1.0}
			\newcommand{\CentreOffset}{1.2}
			
			\newcommand{\MergeRecWidth}{2.0}
			\newcommand{\CenterRecWidth}{2.5}
			\newcommand{\LDRecWidth}{1.4}
			\newcommand{\RecHeight}{0.6}
			\newcommand{\MergeRecHeight}{0.8}
			\newcommand{\CenterRecHeight}{0.6}
			\newcommand{\LDRecHeight}{0.8}
			\newcommand{\Offset}{0.1} %
			\draw[fill=black!5, rounded corners = 0.15cm] (\CentreFile - \CenterRecWidth, \TopRank + \CenterRecHeight) rectangle (\CentreFile + \CenterRecWidth + \Offset, \BottomRank - \CenterRecHeight);
			\draw[fill=black!5, rounded corners = 0.15cm] (\MergeFile - \MergeRecWidth, \MergeBottomRank + \MergeRecHeight) rectangle (\MergeFile + \MergeRecWidth, \MergeBottomRank - \MergeRecHeight);
			\draw[fill=black!5, rounded corners = 0.15cm] (\LDQFile - \LDRecWidth, \MergeTopRank + \CenterRecHeight) rectangle (\LDQFile + \LDRecWidth, \MergeBottomRank - \LDRecHeight);
			
			\node[draw, rounded corners](M) at (\MergeFile,\MergeTopRank){$\MRes$};
			\node[draw, rounded corners, align=center](RM) at (\MergeFile + \MergeOffset,\MergeBottomRank){Regular\\$\MRes$};
			\node[draw, rounded corners, align=center](TM) at (\MergeFile - \MergeOffset,\MergeBottomRank){Tree-like\\$\MRes$};
			\node[draw, rounded corners](CP) at (\CentreFile - \CentreOffset,\TopRank){$\QCP$};
			\node[draw, rounded corners](QU) at (\CentreFile - \CentreOffset,\MiddleRank){$\QURes$};
			\node[draw, rounded corners](Q) at (\CentreFile - \CentreOffset,\BottomRank){$\QRes$};
			\node[draw, rounded corners](LDQ) at (\LDQFile,\MergeTopRank){$\LDQRes$};
			\node[draw, rounded corners](IR) at (\CentreFile + \CentreOffset,\TopRank){$\IR$};
			\node[draw, rounded corners](E) at (\CentreFile + \CentreOffset,\MiddleRank){$\ExpRes$};
			\node[draw, rounded corners, align=center](rLDQ) at (\LDQFile,\MergeBottomRank){Reductionless\\$\LDQRes$};
			
			\newcommand{\Thick}{0.3mm}
			\newcommand{\Thicker}{0.55mm}
			
			\node (P1) at (\MergeFile + \MergeRecWidth - 0.05, \MergeBottomRank){};
			\node (P2) at (\CentreFile - \CenterRecWidth + 0.08, \MergeBottomRank){};
			\node (P3) at (\CentreFile + \CenterRecWidth, \MiddleRank){};
			\node (P4) at (\LDQFile - \LDRecWidth + 0.08, \MiddleRank){};
			\draw[dotted, line width = \Thicker](M) -- (CP) ;
			\draw[dotted, line width = \Thicker](M) -- (QU) ;
			\draw[dotted, line width = \Thick](P1) -- (P2) ;
			\draw[dotted, line width = \Thick, gray](P3) -- (P4) ;

			\begin{scope}[decoration = {markings, mark = at position 0.7 with {\arrow{>}}}]
				\draw[postaction = {decorate}, gray](CP) -- (QU);
				\draw[postaction = {decorate}, gray](QU) -- (Q);
				\draw[postaction = {decorate}, gray](IR) -- (E);
				\draw[postaction = {decorate}, gray](LDQ) to [out = 240, in = 1] (Q);
				\draw[postaction = {decorate}, gray](IR) to [out = 200, in = 25] (Q);
				\draw[postaction = {decorate}, gray](LDQ) -- (rLDQ);
				\draw[postaction = {decorate}, gray](M) to [out = 45, in = 105] (rLDQ);
			\end{scope}
			
			\begin{scope}[decoration = {markings, mark = at position 0.6 with {\arrow[gray]{>}}}]
				\draw[postaction = {decorate}, gray] (M) -- (TM);
				\draw[postaction = {decorate}, gray, dotted, line width = \Thick] (M) -- (RM);
			\end{scope}
			
			\newcommand{\KeyRank}{0}
			\newcommand{\KeyLeft}{0.1}
			\newcommand{\KeyCentre}{2.1}
			\newcommand{\KeyRight}{6.9}
			\newcommand{\KeyOffset}{0.5}
			\newcommand{\AtoBOffset}{1.8}
			\newcommand{\DescriptionOffset}{4.3}
			
			\node[draw, align=center, rounded corners=0.05cm](Ai) at (\KeyLeft,\KeyRank + \KeyOffset){$\mathsf{A}$};
			\node[draw, align=center, rounded corners=0.05cm](Bi) at (\KeyLeft + \AtoBOffset,\KeyRank + \KeyOffset){$\mathsf{B}$};      
			\begin{scope}[decoration = {markings, mark= at position 0.6 with {\arrow{>}}}]
				\draw[postaction = {decorate}, dotted, line width = \Thick](Ai)--(Bi);      
			\end{scope}
			\node[align = left, text width = 4cm] at (\KeyLeft + \DescriptionOffset,\KeyRank + \KeyOffset){$\mathsf{A}$ $p$-simulates $\mathsf{B}$};
			
			\node[draw, align=center, rounded corners=0.05cm](Ai) at (\KeyLeft,\KeyRank - \KeyOffset){$\mathsf{A}$};
			\node[draw, align=center, rounded corners=0.05cm](Bi) at (\KeyLeft + \AtoBOffset,\KeyRank - \KeyOffset){$\mathsf{B}$};      
			\begin{scope}[decoration = {markings, mark= at position 0.6 with {\arrow{>}}}]
				\draw[postaction = {decorate}, line width = \Thick](Ai)--(Bi);      
			\end{scope}
			\node[align = left, text width = 4cm] at (\KeyLeft + \DescriptionOffset,\KeyRank - \KeyOffset){$\mathsf{A}$ \emph{strictly} \\ $p$-simulates $\mathsf{B}$};
			
			\node[draw, align=center, rounded corners=0.05cm](Aii) at (\KeyRight,\KeyRank){$\mathsf{A}$};
			\node[draw, align=center, rounded corners=0.05cm](Bii) at (\KeyRight + \AtoBOffset,\KeyRank){$\mathsf{B}$};
			\draw[dotted, line width = \Thick](Aii)--(Bii);
			\node[align = left, text width = 4cm] at (\KeyRight + \DescriptionOffset,\KeyRank){$\mathsf{A}$ and $\mathsf{B}$ are\\incomparable};
			
		\end{tikzpicture}
	\end{center}
	\caption{Visual summary of the proof complexity landscape, with new results shown in bold. %
	Lines from/to a big grey box mean that the line is from/to every proof system within the box.
	New separations are summarised in \Cref{cor:tree-and-regular-MRes-incomparable-with-five-systems,cor:MRes-incomparable}.
	}
	\label{fig:QBF-simulation-order}
	\Description{}
\end{figure}

\smallskip\noindent
\textbf{(A) Lower bounds from circuit complexity for restricted versions of $\MRes$.}~Since the strategies are explicitly represented inside
the proofs, computational hardness of strategies immediately
translates to proof size lower bounds. While computational hardness of
strategies is a known source of hardness in all reduction-based proof
systems admitting efficient strategy extraction \cite{BBC-ITCS16,BCJ-ToCT-Sep2019},
the computational model relevant for $\MRes$ is
one for which no unconditional lower bounds are known. For tree-like
and regular $\MRes$, the relevant models are decision trees and
read-once DBPs, where lower bounds are known. Using this approach, we
show:
\begin{enumerate}
\item Tree-like $\MRes$ does not simulate regular and general $\MRes$,  in terms of genuine size.
  The  QParity formulas witness the separation
  (\Cref{thm:treeMRes-parity-lb}) as their unique countermodel is the parity function which requires large decision trees.

Note: unlike in the propositional setting, we do not know whether regular $\MRes$ simulates tree-like $\MRes$. 
\item Tree-like $\MRes$ is incomparable with the dag-like and tree-like
  versions of $\QRes$, $\QURes$, $\QCP$, $\ExpRes$ and $\IR$ (\Cref{cor:tree-and-regular-MRes-incomparable-with-five-systems}). \\
  One direction was shown in \cite{DBLP:journals/jar/BeyersdorffBM21} via the Equality
  formulas: these formulas are easy for tree-like $\MRes$ but hard for dag-like $\QRes$, $\QURes$, $\QCP$, $\ExpRes$, $\IR$. The other direction is witnessed by the Completion
  Principle formulas, easy in tree-like versions of $\QRes$ and $\ExpRes$ \cite{Janota-QRes-and-CDCL-SAT16,Janota-Expansion-vs-QRes-TCS15}, but exponentially hard for tree-like $\MRes$ (\Cref{thm:treeMRes-CR}). Unlike the
  QParity formulas, these formulas do not have unique
  countermodels. However, we show that every countermodel requires
  large decision-tree size, and hence obtain the lower bound for tree-like $\MRes$.%
\end{enumerate}

\noindent \textbf{(B) Combinatorial lower bounds for $\MRes$.}~Even when winning strategies are easy to
compute by DBPs, the formulas can be hard for $\MRes$.  We establish
such hardness in three cases, obtaining more incomparabilities.
\begin{enumerate}
\item The LQParity formulas, %
  easy in $\ExpRes$
  \cite{BCJ-ToCT-Sep2019}, are exponentially hard for regular $\MRes$
  (\Cref{thm:LQParity-lb}). Hence regular $\MRes$ is incomparable with
  $\ExpRes$ and~$\IR$.%
  \item The Completion Principle formulas, easy in tree-like versions of $\QRes$ and $\ExpRes$ \cite{Janota-QRes-and-CDCL-SAT16,Janota-Expansion-vs-QRes-TCS15}, are exponentially hard for regular $\MRes$ (\Cref{thm:CR-regular-lower-bound}). Hence regular $\MRes$ is incomparable with the dag-like and tree-like versions of $\QRes$, $\QURes$, $\QCP$, $\ExpRes$ and $\IR$ (\Cref{cor:tree-and-regular-MRes-incomparable-with-five-systems}).
\item The KBKF-lq formulas, %
  easy in $\QURes$
  \cite{BWJ-SAT14}, are exponentially hard for $\MRes$
  (\Cref{thm:mres-lb}). Hence $\MRes$ is incomparable with $\QURes$ and $\QCP$ (\Cref{cor:MRes-incomparable}). 
\end{enumerate}

The third hardness result above for the KBKF-lq formulas provides the first genuine lower bound for the full system of $\MRes$, for which previously no such lower bounds were known.

\Cref{fig:QBF-simulation-order} depicts the \emph{simulation order and
incomparabilities} we establish involving $\MRes$ and its
refinements. Amongst the five systems
in the big grey box, all relationships not directly implied by depicted
connections are known to be incomparabilities \cite{BCJ-ToCT-Sep2019,Janota-Expansion-vs-QRes-TCS15,BCMS-IC18}.

More recently, upper bounds for  the proof system $\MRes$ have been  established, in \cite{CS-STACS22}, and variants of $\MRes$  have been explored, in \cite{CS-STACS23}. 

\paragraph{Organization of this paper} We define QBFs and $\MRes$ in \Cref{sec:prelim}. In \Cref{sec:lower bounds}, we prove lower bounds for many formula families. Finally, in \Cref{sec:proof-system-realtions}, we give the resulting separations among QBF proof systems.

\section{Preliminaries}
\label{sec:prelim}
Let $[n] = \{ 1,2, \ldots, n\}$ and $[m,n] = \{m, \ldots,n\}$.

Variables take Boolean values, and a literal $\ell$ is a variable $x$ or its negation $\neg x$ (also denoted $\bar{x}$). We say that $x=\var(\ell)$. A clause is a disjunction of literals, and a conjunctive-normal-form (CNF) formula is a conjunction of clauses.  
We represent clauses interchangeably as disjunctions of literals and sets of literals. Similarly, we represent CNF formulas interchangeably as conjunctions of clauses and sets of clauses. 

The \emph{resolution rule} derives, from clauses $C\vee \ell$ and $D\vee \neg
\ell$ for some literal $\ell$, the clause $C\vee D$.  We say that $C \vee D$ is the resolvent, $x=\var(\ell)$ is
the pivot, and denote this by $C \vee D = \res(C\vee \ell, D\vee \neg \ell, x)$. Representing clauses as sets of literals, we say that $C \cup D$ is the resolvent of $C \cup \{\ell\}$ and $D \cup \{\overline{\ell}\}$ on pivot $x$, and denote this by $C \cup D = \res(C \cup \{\ell\}, D \cup \{\overline{\ell}\}, x)$. 

The \emph{propositional proof system Resolution} proves that a CNF formula
$F$ is unsatisfiable by deriving the empty clause through repeated
applications of the resolution rule.

\subsection{Quantified Boolean formulas}
A \emph{Quantified Boolean Formula} (QBF) in \emph{prenex conjunctive normal form} is denoted $\Phi \coloneqq Q \cdot \phi$, where 
\begin{itemize}
	\item $Q = Q_1 Z_1 Q_2 Z_2 \ldots Q_k Z_k$ is the quantifier prefix, in which $Z_i$ are pairwise disjoint finite sets of Boolean variables, $Q_i \in \{\exists,\forall\}$ for each $i \in [k]$ and $Q_i \neq Q_{i+1}$ for each $i \in [k-1]$, and
	\item the matrix $\phi$ is a CNF over $\vars(\Phi) \coloneqq \cup_{i \in [k]} Z_i$.
\end{itemize}

The existential (resp.~universal) variables of $\Phi$, typically denoted $X$ or $X_\exists$ (resp.~$U$ or $X_\forall$) is the set obtained as the union of $Z_i$ for which $Q_i = \exists$ (resp.~$Q_i = \forall$). The prefix $Q$ defines a binary relation $<_Q$ on $\vars(\Phi)$, such that $z <_Q z'$ holds iff $z \in Z_i$, $z' \in Z_j$, and $i < j$, in which case we say that $z'$ is right of $z$ and $z$ is left of $z'$. For each $u \in U$, we define $L_Q(u) \coloneqq \{x \in X \mid x <_Q u\}$, i.e.~the existential variables left of $u$.

For a set of variables $Z$, let $\langle Z \rangle$ denote the set of assignments to $Z$. A \emph{strategy} $h$ for a QBF $\Phi$ is a set $\{h^u \mid u \in U\}$ of functions $h^u \colon \langle L_Q(u) \rangle \to \{0,1\}$ (for each $\alpha \in \langle X \rangle$, $h^u(\alpha \restriction_{L_Q(u)})$ and $h(\alpha)$ should be interpreted as a Boolean assignment to the variable $u$ and the variable set $U$ respectively). Additionally $h$ is \emph{winning} if, for each $\alpha \in \langle X \rangle$, the restriction of $\phi$ by the assignment $(\alpha, h(\alpha))$ is false. We use the terms ``winning strategy'' and ``countermodel'' interchangeably. A QBF is called false if it has a countermodel, and true if it does not.

The semantics of QBFs is also explained by a \emph{two-player evaluation
game} played on a QBF. In a run of the game, two players, the
existential and the universal player, assign values to the variables
in the order of quantification in the prefix. The existential player
wins if the assignment so constructed satisfies all the clauses of
$\phi$; otherwise the universal player wins. Assigning values
according to a countermodel guarantees that the universal player wins
no matter how the existential player plays; hence the term ``winning
strategy''.

\subsection{The  Merge Resolution proof system}

We first describe the idea behind the Merge Resolution ($\MRes$) proof system.
\MRes\ is a line-based proof system. A refutation in Merge Resolution is a sequence of lines. Each line $L$ consists of a clause $C$
with only existential literals, and a partial strategy $h^u$ for each
universal variable $u$. The idea is to maintain the invariant that for
each existential assignment $\alpha$, if $\alpha$ falsifies $C$, then
$\alpha$ extended by the partial universal assignment setting each $u$
to $h^u(\alpha)$ falsifies at least one of the clauses used to derive
$L$.  Thus the set of functions $\{h^u\}$ gives a partial strategy (for the universal player)
that wins whenever the existential player plays from the set of
assignments falsifying $C$. The goal is to derive a line with the
empty clause; the corresponding strategy at that line will be a
complete winning strategy for the universal player, i,e.~a countermodel. Along the way, resolution
is used on the clauses. If the pivot is $x$, then for universal
variables $u$ right of $x$, the partial strategies can be combined
with a branching decision on $x$. However, for $u$ left of $x$, in the
evaluation game, the value of $u$ is already set when $x$ is to be
assigned. Thus already existing non-trivial partial strategies for $u$
cannot be combined with a branching decision, and so this resolution
step is blocked. However, if both the strategies are identical, or if
one of them is trivial (unspecified), then the non-trivial
strategy can be carried forward while maintaining the desired
invariant. Checking whether strategies are identical can itself be
hard, making verification of the proof difficult. In \MRes, this is
handled by choosing a particular representation called merge maps,
where isomorphism checks are easy.

Now we can describe the proof system itself. 
First we describe \emph{merge maps}.
\begin{definition}
	\emph{Merge maps} are  deterministic
	branching programs, specified by a sequence of instructions of one of
	the following two forms:
	\begin{itemize}
		\item 
		\texttt{$\langle \text{line~}\ell\rangle: b$}, where $b \in
		\{*,0,1\}$.\footnote{In \cite{DBLP:journals/jar/BeyersdorffBM21}, the notation used is
			$b \in \{*,u,\overline{u}\}$;
			$u,\overline{u},*$ denote  $u=1,u=0$, undefined respectively.
		}  \\ Merge maps containing a single such instruction are
		called simple. In particular, if $b=*$, then they are called
		trivial.
		\item
		\texttt{$\langle \text{line ~}\ell\rangle: \text{~If~} x=0 \text{~then
				go to~} \langle \text{line~} \ell_1\rangle \text{~else go to~}
			\langle \text{line~} \ell_2\rangle$}, for some $\ell_1, \ell_2 <
		\ell$. In a merge map $M$ for $u$, all queried variables $x$ must
		precede $u$ in the quantifier prefix. \\ Merge maps with such
		instructions are called complex.
	\end{itemize}
	(All line numbers are natural numbers.)  The merge map $M^u$ computes
	a partial strategy for the universal variable $u$ starting at the
	largest line number (the leading instruction) and following the
	instructions in the natural way. The value $*$ denotes an undefined
	value.
\end{definition}
\begin{definition}
	Two merge maps $M_1$ and $M_2$ are said to be \emph{consistent}, denoted $M_1 \bowtie M_2$, if for every line number $i$ appearing in both
	$M_1,M_2$, the instructions with line number $i$ are
	identical.
\end{definition}
When two merge maps, $M_1$ and $M_2$, are consistent, it is possible to build the merge map: \texttt{If $x = 0$ then go to $M_1$ else go to $M_2$} without repeating the common parts of $M_1$ and $M_2$. To be more precise, the new merge map will contain all instructions of $M_1$ and $M_2$ and the following additional instruction: \texttt{If $x = 0$ then go to $\langle$leading instruction of $M_1$$\rangle$ else go to $\langle$leading instruction of $M_2$$\rangle$}.

\begin{definition}
	Two merge maps $M_1, M_2$ are said to be isomorphic,
	denoted $M_1 \simeq M_2$, if there is a bijection between the line
	numbers in $M_1$ and $M_2$ that transforms $M_1$ to $M_2$ in
	the natural way. 
\end{definition}

For the remainder of this section let $\Phi=Q\cdot\phi$ be a QBF with existential variables $X$ and universal variables $U$.
\begin{definition}
	The \emph{proof system \MRes} has the following rules:
	\begin{enumerate}
		\item \emph{Axiom:} For a clause $A$ in the matrix $\phi$, let $C$ be the
		existential part of $A$. For each universal variable $u$, let
		$b_u$ be the value $u$ must take to falsify $A$; if
		$u\not\in\var(A)$, then $b_u=*$. For any natural number $i$, the
		line $(C,\{M^u:u\in U\})$ where each $M^u$ is the simple
		merge map $\langle i\rangle: b_u$ can be derived in \MRes.
		\item \emph{Resolution:} From lines $L_a=(C_a,\{M^u_a:u\in
		U\})$ for $a\in \{0,1\}$, in \MRes, the line
		$L=(C,\{M^u:u\in U\})$ can be derived, where for some
		$x\in X$,
		\begin{itemize}
			\item    $C = \res(C_0,C_1,x)$, and 
			\item for each $u\in U$; either
			$M^u_a$ is trivial and $M^u=M^u_{1-a}$ for some $a$; or
			$M^u=M^u_0\simeq M^u_1$; or
			$x$ precedes $u$, $M_1 \bowtie M_2$ and $M^u$ %
			has all the instructions of $M^u_1$ and $M^u_2$ in addition to the following instruction: \\
			\texttt{If $x = 0$ then go to $\langle$leading instruction of $M^u_1$$\rangle$ else go to $\langle$leading instruction of $M^u_2$$\rangle$}.\\
			The line number of this leading instruction is the  number (position) of the line $L$ in the derivation.
		\end{itemize}
		With slight abuse of notation, we will call $L$ the resolvent of $L_0$ and $L_1$ with pivot $x$, and denote this by $L = \res(L_0,L_1,x)$.
		
		Note that \cite{DBLP:journals/jar/BeyersdorffBM21} also requires that the positive literal of the pivot  appears in the first argument, so $x \in C_0$ (i.e.~the clause at line $L_0$) and $\overline{x} \in C_1$ (the clause at line $L_1$). However, this was only for syntactic convenience, and the way we formulate our arguments, this is not necessary.)
	\end{enumerate}
\end{definition}
Note that the entire merge maps are not stored at each line, only the
leading instruction specific to the line.  Due to consistency, this is
enough information to build the entire map from the derivation. As
noted in \cite{DBLP:journals/jar/BeyersdorffBM21} (Proposition 19), for lines within the same
derivation, the corresponding merge maps are always
consistent. Therefore, in the above definition, we don't have to
explicitly do a consistency check.
\begin{definition}
	A \emph{refutation} is a derivation using these rules and ending in a line
	with the empty existential clause. The size of the refutation is the
	number of lines.
\end{definition}

In the rest of this paper, we will denote refutations by the Greek letter $\Pi$. A refutation can be represented as a graph (with edges directed from the antecedents to the consequent, hence from the axioms to the final line). We denote the graph corresponding to refutation $\Pi$ by $G_{\Pi}$.
The lines of $\Pi$ will be denoted by $L$, $L_1$, $L_2$, $L'$, $L''$ etc. For lines $L$, $L_i$ and $L'$, and universal variable $z$, the respective clause, merge map and the function computed by the merge map will be denoted by $C$, $M^z$, $h^z$, $C_i$, $M^z_i$, $h^z_i$ and $C'$, $(M')^z$, $(h')^z$ respectively.

\begin{definition}
	Let  $Y$ be a subset of the existential variables $X$ of $\Phi$. We say that an $\MRes$
	refutation $\Pi$ of $\Phi$ is \emph{$Y$-regular} if for each $y\in Y$, there is no
	leaf-to-root path in $G_{\Pi}$ that uses $y$ as pivot more
	than once. An $X$-regular proof is simply called a \emph{regular proof}.
	If $G_\Pi$ is a tree, then we say that $\Pi$ is a \emph{tree-like proof}.
\end{definition}

\begin{example}\label{ex:Equality}
	We reproduce from \cite{DBLP:journals/jar/BeyersdorffBM21} a small example to illustrate how $\MRes$ operates. The formulas to be refuted are the Equality formulas from \cite{BBH-LMCS19}, defined as follows:
	The \emph{Equality family} is the QBF family whose $n$th instance has
	the prefix $\exists x_1, \dots, x_n, \forall u_1, \dots, u_n,
	\exists t_1, \dots ,t_n$ and the matrix consisting of the clauses
	$\{x_i,u_i,t_i\},\{\overline{x}_i,\overline{u}_i,t_i\}$ for $i \in [n]$, and
	$\{\overline{t}_1, \dots ,\overline{t}_n\}$.
	
	In \cite{DBLP:journals/jar/BeyersdorffBM21} (Example~3), these formulas are shown to have linear-size refutations in the system $\rLDQRes$ denoting reductionless $\LDQRes$, the fragment of $\LDQRes$ where all reductions are postponed to the very end (no reduction step is followed by a resolution step). Later in \cite{DBLP:journals/jar/BeyersdorffBM21} (Theorem~22), $\MRes$ is
	shown to simulate reductionless $\LDQRes$. Hence these formulas are easy to refute in $\MRes$.
On the other hand, these formulas are known to require exponential-size refutations in $\QRes$, $\QURes$, $\QCP$ \cite{BBH-LMCS19}, $\ExpRes$ and $\IR$ \cite{BB-TOCS20} (cf.\ \cite{BeyersdorffB17-ECCC} on how to apply the lower bound technique from \cite{BB-TOCS20} to the Equality formulas).

        Here, we directly present the
	implied linear-size $\MRes$ refutations (in fact, these refutations are also tree-like and regular) for the Equality formulas. 
	
	First, we download the axioms. Line $0$ downloads the long clause $\{\overline{t}_1, \dots ,\overline{t}_n\}$, with
	all trivial merge maps. The next $2n$ lines download the short axiom
	clauses. Letting $i\in [n]$, we define these lines as follows: 
	Line $2i-1$ is the clause $\{x_i,t_i\}$ with merge map $0$ for $u_i$
	and all other merge maps are trivial.
	Line $2i$ is the clause $\{\overline{x}_i,t_i\}$ with merge map $1$ for $u_i$
	and all other merge maps are trivial.
	
	For $i\in[n]$, line $2n+i$ is obtained by applying the merge
	resolution rule on lines $2i-1$ and $2i$. This gives the clause
	$\{t_i\}$; the merge maps for $j\neq i$ are trivial, and the merge map
	for $u_i$ has the instruction:
	\texttt{$\text{~If~} x_i=0 \text{~then
		go to~} \langle \text{line~} 2i-1\rangle \text{~else go to~}
	\langle \text{line~} 2i \rangle$}. 
	
	At line $3n+1$, applying merge resolution on lines 0 and $2n+1$, we
	obtain the clause $\{\overline{t}_2, \dots ,\overline{t}_n\}$. The merge map for
	$u_1$ is taken from line $2n+1$, since at line 0 it is trivial.
	
	Now for $i\in [2,n]$, line $3n+i$ is obtained by applying merge
	resolution on lines $2n+i$ and $3n+i-1$. This gives the clause
	$\{\overline{t}_{i+1}, \dots ,\overline{t}_n\}$. The merge map for $u_i$ is
	taken from line $2n+i$ since at line $3n+i-1$ it is trivial.  For
	$j< i$, the merge map for $u_j$ is taken from line $3n+i-1$ since at
	line $2n+i$ it is trivial. Effectively, at this line, for all $j\le
	i$, the merge map for $u_j$ is from line $2n+j$, and for all $j >i$, the merge map for $u_j$ is trivial.
	
	Line $4n$ derives the empty clause and the strategy computing, for each
	$i\in[n]$, $u_i=x_i$. This completes the refutation.
\end{example}

A  crucial fact about the proof system $\MRes$,  shown in \cite{DBLP:journals/jar/BeyersdorffBM21}, is that the merge maps at the final line of an $\MRes$ refutation compute a countermodel for the QBF. To 
establish this fact, some stronger properties of $\MRes$ refutations are
established and will be useful to us. We restate the relevant
properties here.

\begin{lemma}[Extracted/adapted from \cite{DBLP:journals/jar/BeyersdorffBM21} Section 4.3, (Proof of Lemma 21)] \label{ref:soundness}
Let $\Phi=Q\cdot\phi$ be a QBF with existential variables $X$ and universal variables $U$. 
Let $\Pi \defeq L_1, \ldots, L_m$ be an $\MRes$ refutation of $\Phi$, where each $L_i = (C_i , \{M_i^u \mid u \in U \})$. Further, for each $i \in [m]$,
\begin{itemize}
\item let $\alpha_i$ be the minimal partial assignment falsifying $C_i$,
\item let $A_i$ be the set of assignments to $X$ consistent with $\alpha_i$,
\item for each $u \in U$, let $h_i^u$ be the function computed by $M_i^u$,
\item for each $\alpha \in A_i$, let $h_i (\alpha)$ be the partial assignment which sets variable $u$ to $h_i^u(\alpha \restriction_{L_Q(u)})$ if $h_i^u(\alpha \restriction_{L_Q(u)}) \neq *$, and leaves it unset otherwise.
\end{itemize}
Then for each $\alpha \in A_i$, the (partial) assignment $(\alpha, h_i(\alpha))$ falsifies at least one clause of $\phi$ used in the sub-derivation of $L_i$.
\end{lemma}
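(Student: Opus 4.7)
The plan is a straightforward induction on $i$, maintaining the stated invariant at every line of the refutation. For the base case, $L_i$ is an axiom download from some clause $A \in \phi$: then $C_i$ is the existential part of $A$ and each $M_i^u$ is the simple merge map encoding the unique value (or $*$) that falsifies the $u$-literal of $A$. Any $\alpha \in A_i$ falsifies the existential part of $A$, and $h_i(\alpha)$ falsifies its universal part, so $(\alpha, h_i(\alpha))$ falsifies $A$ itself, which is trivially a clause of $\phi$ used in the sub-derivation of $L_i$.

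For the inductive step, $L_i$ is obtained by resolution on a pivot $x \in X$ from earlier lines $L_{j_0}$ and $L_{j_1}$ with $x \in C_{j_0}$ and $\overline{x} \in C_{j_1}$. Given $\alpha \in A_i$, set $a = 0$ if $\alpha(x) = 0$ and $a = 1$ otherwise; in either case $\alpha$ falsifies $C_{j_a}$, so $\alpha \in A_{j_a}$. The induction hypothesis yields a clause $D \in \phi$ in the sub-derivation of $L_{j_a}$ (hence of $L_i$) such that $(\alpha, h_{j_a}(\alpha))$ falsifies $D$. The task reduces to showing that $(\alpha, h_i(\alpha))$ also falsifies $D$, which in turn reduces to verifying, for every $u \in \var(D) \cap U$, the equality $h_i^u(\alpha \restriction_{L_Q(u)}) = h_{j_a}^u(\alpha \restriction_{L_Q(u)})$.

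The key step is a small case analysis on which of the three merge-map options of the resolution rule produced $M_i^u$. Since $u \in \var(D)$ and $(\alpha, h_{j_a}(\alpha))$ falsifies $D$, we have $h_{j_a}^u(\alpha \restriction_{L_Q(u)}) \neq *$; in particular $M_{j_a}^u$ is not trivial. If the rule applies the ``one trivial map'' option, it must therefore be $M_{j_{1-a}}^u$ that is trivial and $M_i^u = M_{j_a}^u$, giving the equality on the nose. If the rule applies the isomorphism option $M_i^u = M_{j_0}^u \simeq M_{j_1}^u$, then the two maps compute the same function, which equals $h_{j_a}^u$. If the rule applies the complex merge-map option, then $x <_Q u$, so $\alpha(x)$ is part of the input $\alpha \restriction_{L_Q(u)}$ to $M_i^u$; evaluating $M_i^u$ therefore branches on $\alpha(x)$ into $M_{j_0}^u$ or $M_{j_1}^u$, landing on precisely $M_{j_a}^u$ by the choice of $a$.

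The main obstacle I anticipate is the trivial-map case: one might worry about the apparent asymmetry in which $M_{j_a}^u$ is trivial while $M_i^u$ inherits a potentially non-trivial $M_{j_{1-a}}^u$, so that $h_i^u(\alpha)$ and $h_{j_a}^u(\alpha)$ disagree. The resolution is that this scenario cannot correspond to any $u \in \var(D)$, because a trivial $M_{j_a}^u$ would force $h_{j_a}^u(\alpha) = *$, contradicting the fact that $D$, a clause containing $u$, is falsified by $(\alpha, h_{j_a}(\alpha))$. For $u \notin \var(D)$ the value of $h_i^u(\alpha)$ is irrelevant to the falsification of $D$. With this observation the case analysis is exhaustive and the induction goes through.
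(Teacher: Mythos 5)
Your proof is correct: the induction on lines, with the choice of antecedent $L_{j_a}$ determined by $\alpha(x)$ and the three-way case analysis on how $M_i^u$ is formed (including the observation that the ``wrong'' trivial-map sub-case is ruled out for $u \in \var(D)$ because $D$'s universal literal must already be falsified by $h_{j_a}(\alpha)$), is exactly the standard soundness argument. The paper itself does not reprove this lemma but imports it from \cite{BBM-JAR20}, and your argument matches the inductive invariant maintained in that source's proof of its Lemma~21.
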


\begin{proposition}[\cite{DBLP:journals/jar/BeyersdorffBM21}] \label{prop:proof-mergemap-isomorphism}
Let $\Phi$ and $\Pi$ be as defined in \Cref{ref:soundness}. Then, for all $u \in U$, $M^u_m$ is isomorphic to a subgraph of $G_{\Pi}$ {\upshape(}up to path contraction{\upshape)}.
\end{proposition}

\section{Lower bounds}
\label{sec:lower bounds}

We now start to explore lower bounds for \MRes. %
In \Cref{subsec:generic-lb}, we show how to construct generic hard formulas for tree-like and regular $\MRes$.
In \Cref{subsec:QParity-lb,subsec:LQParity-lb,subsec:CR-lb-tree-and-regular,subsec:KBKFlq-lb}, we prove lower bounds for specific QBF formulas.

\subsection{Lower bounds for generic formulas}
\label{subsec:generic-lb}

The following theorem, implicit in \cite{DBLP:journals/jar/BeyersdorffBM21}, is an immediate consequence of \Cref{ref:soundness} and \Cref{prop:proof-mergemap-isomorphism}.
\begin{theorem}
	\label{thm:lb-from-strategy-extraction}
	Let $\Phi=Q\cdot\phi$ be a false QBF with existential variables $X$ and universal variables $U$. If, for every countermodel of $\Phi$, the function for some universal variable $u$ requires size at least $s$ to compute by branching programs (resp.~decision trees, read-once branching programs), then every $\MRes$ (resp.~tree-like $\MRes$, regular $\MRes$) refutation of $\Phi$ has  size at least $s$.
\end{theorem}

Currently, no exponential lower bounds are known for general branching programs. Therefore, we cannot use the above theorem to prove lower bounds for general $\MRes$. However, we can use it to prove exponential lower bounds for tree-like and regular $\MRes$. To do so, 
we need a QBF whose countermodel requires exponential decision-trees (resp.~read-once branching programs).
We now show how to construct such QBFs.
This follows the method used, for instance,  in \cite[Sec.~4.1]{BCJ-ToCT-Sep2019} and \cite[Sec.~6]{PeitlSS19a}. 

Let $f \colon X \to \{0,1\}$ be a Boolean function, let $C_f$ be a Boolean circuit encoding $f$, and let $u$ be a variable not in $X$. Using the Tseitin transformation \cite{Tseitin83}, we can construct a CNF formula $\phi(X,u,Y)$ such that $\exists Y. \phi(X,u,Y)$ is logically equivalent to $C_f(X) \neq u$. Then, the QBF formula $\Phi \coloneqq \exists X \forall u \exists Y. \phi(X,u,Y)$, called the QBF encoding of $f$, is a false QBF formula with $f$ as the unique winning strategy. Moreover, the size of $\Phi$ is polynomial in the size of $C_f$. This is the desired QBF formula.

\subsection{The QParity formulas}
\label{subsec:QParity-lb}
We now turn our attention to lower bounds for specific formulas. We start with the $\QParity$ formulas in this \namecref{subsec:QParity-lb}.
These are the formulas obtained by the Tseitin transformation described above, using a linear-size read-once branching program computing the parity function. 
These formulas were defined in \cite{BCJ-ToCT-Sep2019} where they were shown to be hard for $\QRes$ and $\QURes$. It was also shown that these formulas are easy for the expansion-based systems $\ExpRes$, $\IR$ and $\IRM$. It was hence concluded that $\QRes$ and $\QURes$ do not simulate $\ExpRes$, $\IR$ and $\IRM$.

\label{subsec:qparity-bp-lifting}

Before we define the formulas, we set up some notation.
For variables $o,o_1,o_2$, let $\xor(o_1,o)$ and $\xor(o_1,o_2,o)$ be the following sets of clauses:
\begin{IEEEeqnarray*}{rCl}
	\xor(o_1,o) &=& \{\overline{o_1} \vee o, o_1 \vee \overline{o} \},\\
	\xor(o_1,o_2,o) &=& \{\overline{o_1} \vee \overline{o_2} \vee \overline{o}, \overline{o_1} \vee o_2 \vee o, o_1 \vee \overline{o_2} \vee o, o_1 \vee o_2 \vee \overline{o} \}
\end{IEEEeqnarray*}
Note that $\xor$ on a set of variables is just the CNF representation of the constraint that the number of variables set to true is even. 
That is, $\xor(o_1,o)$ is satisfied iff $o \equiv o_1 \pmod{2}$, and $\xor(o_1,o_2,o)$ is satisfied iff $o \equiv o_1 + o_2 \pmod{2}$.

\begin{definition}
	The $\QParity_n$ formula \cite{BCJ-ToCT-Sep2019} is the QBF $\exists x_1, \ldots , x_n, \forall z, \exists t_1, \ldots , t_n.\ \big(\bigwedge_{i \in [n+1]} \phi_n^{i}\big)$ where
	\begin{IEEEeqnarray*}{cCl}
		\phi_n^{1} &=& \xor(x_1,t_1);\\
		\phi_n^{i} &=& \xor(t_{i-1},x_i,t_i), \qquad \forall i \in [2,n];\\
		\phi_n^{n+1} &=& \{t_n \vee z, \overline{t_n} \vee \overline{z}\}.
	\end{IEEEeqnarray*}
\end{definition}

The QBFs are false: they claim that there exist $x_1,\dots, x_n$ such that $x_1 + \dots + x_n$ is neither congruent to $0$ nor $1$ modulo $2$. Note that the only winning strategy for the universal player is to play $z$ satisfying $z \equiv x_1 + \dots + x_n \pmod{2}$.
 
\begin{theorem}
  \label{thm:treeMRes-parity-lb}
Every tree-like $\MRes$ refutation of $\QParity_n$ has size at least $2^n$.
\end{theorem}
\begin{proof}
	It is a folklore fact that the $n$-input parity function %
	requires decision-trees of size at least $2^n$.
	From \Cref{thm:lb-from-strategy-extraction}, we obtain the desired lower bound.
\end{proof}

\subsection{The LQParity formulas}
\label{subsec:LQParity-lb}
We now turn our attention to the $\LQParity$ formulas.
These formulas are variants of the $\QParity$ formulas, and were originally defined in \cite{BCJ-ToCT-Sep2019}.  The variant is designed 
to allow the $\QRes$ lower bound arguments for $\QParity$ to be adapted also to $\LDQRes$. Like the $\QParity$ formulas, these formulas are easy for several proof systems, including $\ExpRes$, $\IR$ and $\IRM$, but were shown to be hard for$\QURes$ and  $\LDQRes$. This then established that $\LDQRes$ does not simulate $\ExpRes$, $\IR$ and $\IRM$ \cite{BCJ-ToCT-Sep2019}. %

We now describe the formulas. They are obtained from the $\QParity$ formulas by duplicating each clause except those in $\phi_n^{n+1}$, and inserting the universal variable $z$ in one copy and its negation $\overline{z}$ in the other. 
Formally, they can be defined as follows: 
For variables $o,o_1,o_2,z$, let $\xorl(o_1,o,z)$ and $\xorl(o_1,o_2,o,z)$ be the following sets of clauses:
\begin{IEEEeqnarray*}{rCl}
	\xorl(o_1,o,z) &=& \{\overline{o_1} \vee o \vee z, o_1 \vee \overline{o} \vee z \},\\
	\xorl(o_1,o_2,o,z) &=& \{\overline{o_1} \vee \overline{o_2} \vee \overline{o} \vee z, \overline{o_1} \vee o_2 \vee o \vee z, o_1 \vee \overline{o_2} \vee o \vee z, o_1 \vee o_2 \vee \overline{o} \vee z\}
\end{IEEEeqnarray*}

\begin{definition}
	The $\LQParity_n$ formula \cite{BCJ-ToCT-Sep2019} is the QBF $\exists x_1, \ldots , x_n, \forall z, \exists t_1, \ldots , t_n.\ \big( \bigwedge_{i \in [n+1]} \phi_n^{i} \big)$ where
	\begin{IEEEeqnarray*}{cCl}
		\phi_n^{1} &=& \xorl(x_1,t_1,z) \cup \xorl(x_1,t_1,\overline{z}),\\
		\phi_n^{i} &=& \xorl(t_{i-1},x_i,t_i,z) \cup \xorl(t_{i-1},x_i,t_i,\overline{z}) \qquad \forall i \in [2,n],\\
		\phi_n^{n+1} &=& \{t_n \vee z,
		\overline{t_n} \vee \overline{z}\}.
	\end{IEEEeqnarray*}
\end{definition}
For $i,j \in [n+1], i \le j$, let $\phi_n^{[i,j]}$ denote
$\bigwedge_{k \in [i,j]} \phi_n^{k}$. Also, let $X=\{x_1,\ldots,x_n\}$ and
$T=\{t_1,\ldots,t_n\}$.

\begin{observation}
	\label{obs:var-in-phi}
	\begin{enumerate*}[label=(\alph*)]
		\item For each $i\in [n]$, and each $C\in \phi_n^i$, $\{x_i,t_i\}
		\subseteq \var(C)$; and
		\item for each $i\in [n+1]\setminus\{1\}$, and each $C\in
		\phi_n^i$, $\{t_{i-1}\} \subseteq \var(C)$.
	\end{enumerate*}
	
\end{observation}

We will now show that $\LQParity$ formulas require exponential-size refutations in regular $\MRes$.

\begin{restatable}{theorem}{restateLQParityLB}
	\label{thm:LQParity-lb}
	Every $T$-regular refutation of $\LQParity_n$ in $\MRes$, and hence any regular $\MRes$ refutation,  has size at least	$2^n$.
\end{restatable}

The proof proceeds as follows: Let $\Pi$ be a $T$-regular \MRes\
refutation of $\LQParity_n$. Since every axiom has a variable from $T$
while the final clause in $\Pi$ is empty, there is a maximal
``component'' (say $\mathcal{S}$) of the proof leading to and including the final line,
where all clauses are $T$-free.  The clauses in this component involve
only the $X$ variables. We show that the ``boundary'' ($\partial{\mathcal{S}}$) of this component
is large, by showing in \Cref{lem:LQParity-width} that each clause
at the boundary must be wide.  (This idea was used in \cite{PeitlSS19a} to show
that $\CR$ is hard for reductionless $\LDQRes$.) To establish the
width bound, we note that no lines have trivial strategies.  Since the
pivots at the boundary are variables from $T$, the merge maps incoming
into each boundary resolution must be isomorphic. By carefully
analysing which axiom clauses can and must be used to derive lines just
above the boundary (\Cref{lem:LQParity-intervals}), we conclude
that the merge maps must be simple, yielding the lower bound.
To fill in all the details,
we first describe some properties (\Cref{lem:properties_of_derived_clauses}) of $\Pi$  that will be used in obtaining this result.

Recall that the lines of $\Pi$ have mergemaps for the  universal variable.
Since $\LQParity$ formulas have a single universal variable, we avoid the superscript $z$. Thus a line $L$ (resp.\ $L_1, L_2, L', L''$ etc) has  a merge map $M$ (resp.\ $M_1,M_2, M', M''$, etc) for $z$ and computes the function $h$
(resp,\ $h_1,h_2,h',h''$, etc.).

Let $G_{\Pi}$ be the derivation graph corresponding to $\Pi$ (with edges directed from the antecedents to the consequent, hence from the axioms to the final line). We will refer to the nodes of this graph by the corresponding line. For $L, L' \in \Pi$, we will say $L \leadsto L'$ if there is a path from $L$ to $L'$ in $G_{\Pi}$. 

For a line $L \in \Pi$,
let $\Pi_{L}$ be the minimal sub-derivation of $L$, and let $G_{\Pi_{L}}$ be the corresponding subgraph of $G_{\Pi}$ with sink $L$. Define $\UsedConstraints(\Pi_{L}) = \{\phi_n^i \mid i \in [n+1], \leaves(G_{\Pi_{L}}) \cap \phi_n^{i} \neq \emptyset \}$, and  $\UsedConstraintInd(\Pi_{L}) = \{ i \in [n+1] \mid \phi_n^{i} \in \UsedConstraints(\Pi_{L})\}$. (The abbreviations $\UsedConstraints$ and $\UsedConstraintInd$ stand for UsedConstraints and UsedConstraintsIndex respectively.) Note that for any leaf  $L$,  $\UsedConstraintInd(\Pi_L)$ is a singleton.

Define $\mathcal{S}$ to be the set of those lines in $\Pi$ where the
clause part has no $T$ variable and furthermore there is a path in
$G_\Pi$ from the line to the final empty clause via lines where all
the clauses also have no $T$ variables. Let $\partial{\mathcal{S}}$, called the boundary of $\mathcal{S}$, denote the
set of leaves in the subgraph of $G_\Pi$ restricted to $\mathcal{S}$;
these are lines that are in $\mathcal{S}$ but their parents are not
in $\mathcal{S}$.
Note that no leaf of
$\Pi$ is in $\mathcal{S}$ because all leaves of $G_\Pi$ contain a
variable in $T$.

\begin{lemma} \label{lem:properties_of_derived_clauses}
Let $L=(C,M)$ be a line of $~\Pi$. Then
$\UsedConstraintInd(\Pi_{L})$ is an
  interval $[i,j]$ for some $1 \le i \le j \le n+1$. Furthermore, 
  {\upshape(}below $i,j$ refer to the endpoints of this interval\,{\upshape)}
  \begin{enumerate}
\item \label{properties:in-interval}
  For all $k \in [i,j-1]$,  $t_k \not\in \var(C)$.
\item \label{properties:low-end}
  If $i> 1$, then $t_{i-1} \in \var(C)$. 
\item \label{properties:high-end}
  If  $j \le n$, then $t_j \in \var(C)$. 
\item \label{properties:singleton}
  $|\var(C)\cap T| = 1$ iff $[i,j]$ contains exactly one of $1,n+1$.\\
  $\var(C)\cap T = \emptyset$ iff $[i,j] = [1,n+1]$.
\item \label{properties:X}
  For all $k \in [i,j] \cap[1,n]$,  $x_{k} \in \var(C) \cup \var(M)$.
\end{enumerate}
\end{lemma}
\begin{proof}
	Let $I=\UsedConstraintInd(\Pi_L)$. 
	Assume, to the contrary, that $I$ is not an interval; for some $k\in
	[2,n]$, $I$ contains an index $i < k$ and an index $j > k$, but does
	not contain $k$.  Let $L'$ be the first line in $\Pi$ such that
	$\UsedConstraintInd(\Pi_{L'})$ intersects both $[1,k-1]$ and
	$[k+1,n+1]$. Since leaves have singleton $\UsedConstraintInd$ sets,
	$L'$ is not a leaf. Say $L'=\res(L'',L''',v)$. Assume that
	$\UsedConstraintInd(\Pi_{L''})\subseteq[1,k-1]$ and
	$\UsedConstraintInd(\Pi_{L'''})\subseteq[k+1,n+1]$; the argument for
	the other case is identical.  So $v \in
	\var_{\exists}(\UsedConstraints(\Pi_{L''}))\subseteq
	\var_{\exists}(\phi_n^{[1,k-1]})$, and $v \in
	\var_{\exists}(\UsedConstraints(\Pi_{L'''}))\subseteq
	\var_{\exists}(\phi_n^{[k+1,n+1]})$. But
	$\var_{\exists}(\phi_n^{[1,k-1]})$ and
	$\var_{\exists}(\phi_n^{[k+1,n+1]})$ are disjoint, a contradiction.
	
	Fixing $i,j$ so that $I=\UsedConstraintInd(\Pi_L) = [i,j]$, we now prove the remaining statements in the Lemma.
	\begin{enumerate}
		\item Fix any $k\in[i,j-1]$. Note that $\{k,k+1\} \subseteq
		\UsedConstraintInd(\Pi_L)$. Let $L'$ be the first line in $\Pi_{L}$
		such that $\{k,k+1\} \subseteq \UsedConstraintInd(\Pi_{L'})$. Say
		$L'$ is obtained as $\res(L'',L''',v)$. Assume that
		$\UsedConstraintInd(\Pi_{L''})$ contributes $k$ and
		$\UsedConstraintInd(\Pi_{L''})$ contributes $k+1$; the other case is
		symmetric.  Since $\UsedConstraintInd(\Pi_{L''})$ must also be an
		interval, and since it contains $k$ but not $k+1$,
		$\UsedConstraintInd(\Pi_{L''})\subseteq [1,k]\cap
		\UsedConstraintInd(\Pi_{L}) = [i,k]$. Similarly,
		$\UsedConstraintInd(\Pi_{L'''})\subseteq [k+1,j]$. The pivot variable $v$
		must thus belong to both $\phi_n^{[i,k]}$ and $\phi_n^{[k+1,j]}$;
		the only such existential variable is $t_k$. Hence each $t_k$ is
		used as a pivot in $\Pi_L$.
		
		Since $\Pi$ is $T$-regular, and since $t_k$ is used as a pivot to
		derive $L'$ inside $\Pi_L$, it cannot reappear in any line on any
		path from (including) $L'$ to the final clause. Hence it does not
		appear in $L$.

		\item Let $i > 1$. By \Cref{obs:var-in-phi}, $t_{i-1}$
		appears in at least one axiom used in $\Pi_L$. Assume to the
		contrary that $t_{i-1} \not\in \var(C)$. Let $\rho_C$ be the minimal
		partial assignment falsifying $C$. By assumption, $\rho_C$ does not
		set $t_{i-1}$, and by item~(\ref{properties:in-interval}) above,
		$\rho_C$ does not set any variable $t_k$ with $i \le k < j$. Extend
		$\rho_C$ arbitrarily to all unassigned variables in $(X \cup T)
		\setminus \{t_{i-1}, \ldots, t_{j-1} \}$ to get $\rho_1$. Since the
		merge map $M$ does not depend on variables in $T$, the partial
		assignment $\rho_1$ is sufficient to evaluate $M$ and $h$. Define
		the value $y$ as follows:
		\[y = \left\{
		\begin{array}{ll}
			\rho_1(t_j) & \text{ if } j \le n, \\
			h(\rho_1) & \text{ if } j = n+1.
		\end{array}
		\right.
		\]
		For $b\in \{0,1\}$, let $\rho_1^b$ denote the extension of $\rho_1$
		by $t_{i-1}=b$.  Exactly one of $\rho_1^0, \rho_1^1$ satisfies the
		equation $t_{i-1} + x_i + x_{i+1} + \ldots + x_j + y \equiv 0 \bmod
		2$; let this extension be $\rho_2$. Then there is a unique extension
		$\alpha$ of $\rho_2$ to $X\cup T$ such that
		\begin{itemize} 
			\item if $j \le n$, then $\alpha$ satisfies the existential part of
			all clauses in $\phi_n^{[i,j]}$;
			\item if $j=n+1$, then $(\alpha, h(\rho_1))$ satisfies all clauses
			in $\phi_n^{[i,j]}$. (That is, assigning $X\cup T$ according to
			$\alpha$ and assigning $z$ the value $h(\rho_1)$ satisfies
			$\phi_n^{[i,j]}$.)
		\end{itemize}
		(To find $\alpha$, work backwards from $y$ to determine the
		appropriate values of $t_{j-1}, t_{j-2}, \ldots , t_{i}$ to
		satisfy $\phi_n^j$, $\phi_n^{j-1}$, $\ldots$, $\phi_n^i$.)
		
		Note that $h(\rho_1) = h(\rho_2) = h(\alpha)$. So $(\alpha,
		h(\alpha))$ falsifies $C$ (since it extends $\rho_C$) and satisfies
		all axiom clauses used to derive $L$. This contradicts
		\Cref{ref:soundness}.
		
		\item Let $j \le n$. Assume to the contrary that
		$t_j\not\in\var(C)$. The argument is identical to that in
		item~\ref{properties:low-end} (only the indices differ): $\rho_C$
		falsifies $C$; $\rho_1$ extends it arbitrarily to all unassigned
		variables in $(X \cup T) \setminus \{t_{i}, \ldots, t_{j} \}$;
		$\rho_2$ is the extension of $\rho_1$ obtained by setting $t_j$ so
		as to satisfy the equation $t_{i-1} + x_i + x_{i+1} + \ldots + x_j +
		t_j \equiv 0 \bmod 2$; (Here, if $i=1$, discard $t_0$ from the
		equation; i.e.\ assume $t_0=0$); $\alpha$ is the unique extension of
		$\rho_2$ to $X\cup T$ satisfying $\phi_n^{[i,j]}$ (To obtain
		$\alpha$, work forwards obtaining $t_i, t_{i+1}, \ldots,
		t_{j-1}$). Now $(\alpha, h(\alpha))$ contradicts
		\Cref{ref:soundness}.

		\item Since $\UsedConstraintInd(\Pi_L) = [i,j]$, variables $t_k$ for
		$k \not\in[i-1,j]$ do not appear in any of the used axioms
		(\Cref{obs:var-in-phi}) and hence do not appear in $C$.
		By the preceding three items, $\var(C)\cap T$ does not include any
		$t_k$ with $k\in [i,j-1]$, includes $t_{i-1}$ whenever $i>1$, and
		includes $t_j$ whenever $j<n+1$. The claim follows.

		\item
		Assume to the contrary that for some $k\in[i,j]$,
		$x_k\not\in\var(C)\cup\var(M)$. The argument is similar to that in
		item~(\ref{properties:low-end}): $\rho_C$ falsifies $C$; $\rho_1$
		extends it arbitrarily to all unassigned variables in
		$(X\setminus\{x_k\}) \cup (T \setminus \{t_{i}, \ldots, t_{j-1}
		\})$; $y$ is the value of $t_j$ if $j\le n$ and the value of $h$
		otherwise (since $x_k\not\in\var(M)$, $\rho_1$ is sufficient to
		evaluate $h$); $\rho_2$ is the extension of $\rho_1$ obtained by
		setting $x_k$ so as to satisfy the equation $t_{i-1} + x_i + x_{i+1}
		+ \ldots + x_j + y \equiv 0 \bmod 2$; (Here, if $i=1$, discard $t_0$
		from the equation; i.e.\ assume $t_0=0$); $\alpha$ is the unique
		extension of $\rho_2$ to $X\cup T$ satisfying $\phi_n^{[i,j]}$ (To
		obtain $\alpha$, work forwards from $t_i$ towards $t_{j-1}$). Now
		$(\alpha, h(\alpha))$ contradicts \Cref{ref:soundness}.
		\qedhere
	\end{enumerate}
\end{proof}

\begin{lemma}\label{lem:LQParity-intervals}
  Let $L \in \partial{\mathcal{S}}$ be derived in $\Pi$ as $L = \res(L',L'',t_k)$. Then
  $\UsedConstraintInd(\Pi_L) = [1,n+1]$, and
  $\UsedConstraintInd(\Pi_{L'}),\UsedConstraintInd(\Pi_{L''})$
  partition $[1,n+1]$ into $[1,k],[k+1,n+1]$.
\end{lemma}
\begin{proof}
	Since $L\in \partial{\mathcal{S}}$, $L$ has no variable from $T$. By
	\Cref{lem:properties_of_derived_clauses}(\ref{properties:singleton}),
	$\UsedConstraintInd(\Pi_L) = [1,n+1]$.	
	
	Since $L=\res(L',L'',t_k)$, we see that $\var(C')\cap T = \var(C'')\cap T =
	\{t_k\}$. By
	\Cref{lem:properties_of_derived_clauses}(\ref{properties:low-end},\ref{properties:high-end},\ref{properties:singleton}), we see that
	$\UsedConstraintInd(\Pi_{L'}), \UsedConstraintInd(\Pi_{L''})
	\in \{[1,k], [k+1,n+1] \}$.	
	If both $\UsedConstraintInd(\Pi_{L'}), \UsedConstraintInd(\Pi_{L''})$ equal
	$[k+1,n+1]$, then $\UsedConstraintInd(\Pi_L)=[k+1,n+1]$, contradicting
	$\UsedConstraintInd(\Pi_l) = [1,n+1]$.
	On the other hand, if both $\UsedConstraintInd(\Pi_{L'}), \UsedConstraintInd(\Pi_{L''})$
	equal $[1,k]$, then $\UsedConstraintInd(\Pi_{L})=[1,k]$. Since $t_k$
	is a pivot variable, $k \le n$, contradicting   $\UsedConstraintInd(\Pi_L) = [1,n+1]$.	
	Hence one each of $\UsedConstraintInd(\Pi_{L'}), \UsedConstraintInd(\Pi_{L''})$
	equals $[1,k]$ and $[k+1,n+1]$ as claimed.
\end{proof}

\begin{lemma}\label{lem:LQParity-width}
For all $L \in \partial{\mathcal{S}}$, $\width(C) = n$.
\end{lemma}
\begin{proof}
	Let $L \in \partial{\mathcal{S}}$ be derived in $\Pi$ as $L = \res(L',L'',t_k)$.  Since
	all axioms create non-trivial strategies, neither $M'$ nor $M''$
	equals $*$. By the rules of MRes, $M'=M''=M\neq *$. We will show
	that in fact $M$ must be a constant strategy, $M \in \{0,1\}$.
	
	By definition of $\partial{\mathcal{S}}$, $\var(C)\cap T = \emptyset$, and hence
	$\var(C')\cap T = \var(C'') \cap T= \{t_k\}$.  By
	\Cref{lem:LQParity-intervals}, $\UsedConstraintInd(\Pi_L) =
	[1,n+1]$ is partitioned by $\UsedConstraintInd(\Pi_{L'})$ and
	$\UsedConstraintInd(\Pi_{L''})$ into $[1,k], [k+1,n+1]$.
	
	Assume $\UsedConstraintInd(\Pi_{L'}) = [1,k]$,
	$\UsedConstraintInd(\Pi_{L''}) = [k+1,n+1]$; the argument in the
	other case is identical.  Then $\var(M) = \var(M') \subseteq
	\var(\phi^{[1,k]}) \cap X = \{x_1,\ldots,x_k\}$, and $\var(M) =
	\var(M'') \subseteq \var(\phi^{[k+1,n+1]}) \cap X =
	\{x_{k+1},\ldots,x_n\}$. The only way both these conditions can be
	satisfied is if $\var(M)=\emptyset$; that is, $M$ is a constant
	strategy.
	
	Since $\UsedConstraintInd(\Pi_{L}) = [1,n+1]$ and
	$\var(M)=\emptyset$, 
	\Cref{lem:properties_of_derived_clauses}(\ref{properties:X}) implies that $X \subseteq \var(C)$.  Therefore $\width(C) = n$.
\end{proof}

Now we can put together the proof of \Cref{thm:LQParity-lb}.
\begin{proof}[Proof of \Cref{thm:LQParity-lb}]
  Let $\Pi$ be a $T$-regular refutation of $\LQParity_n$ in $\MRes$. 
  Let $\mathcal{S}$ and $\partial{\mathcal{S}}$  be as defined in the beginning of this sub-section. By definition, 
  for each $L=(C,M) \in \mathcal{S}$, $\var(C) \subseteq X$. 
  Let $\widehat{\Pi} = \{ C \mid L=(C,M) \in \mathcal{S} \}$. Then $\widehat{\Pi}$ contains a propositional resolution refutation of $\mathcal{C} = \{ C \mid L=(C,M) \in \partial{\mathcal{S}}\}$. Therefore $\mathcal{C}$ is an unsatisfiable CNF formula over the $n$ variables in $X$. By \Cref{lem:LQParity-width}, each clause in $\mathcal{C}$ has width $n$ and so is falsified by exactly one assignment. Therefore, to ensure that each of the $2^n$ assignments falsifies some clause, (at least) $2^{n}$ clauses are required. Therefore $\card{\mathcal{C}} \geqslant 2^n$. Hence $\card{\Pi} \geqslant 2^n$.
\end{proof}

\subsection{The Completion Principle formulas}
\label{subsec:CR-lb-tree-and-regular}
We now move to the
Completion Principle ($\CR_{n}$) formulas 
introduced in \cite{Janota-Expansion-vs-QRes-TCS15}. %
From a proof-complexity viewpoint, these formulas are very simple: they have  polynomial-size, in fact linear-size, refutations in $\QRes$, and hence in $\QURes$, $\QCP$, $\ExpRes$ and $\IR$ \cite{Janota-QRes-and-CDCL-SAT16,Janota-Expansion-vs-QRes-TCS15}. The $\QRes$ refutations are even tree-like; \cite{MS16a}. They are known to be hard for $\QRes$ if the resolution pivots must respect the quantifier ordering (level-ordered $\QRes$); \cite{Janota-QRes-and-CDCL-SAT16,Janota-Expansion-vs-QRes-TCS15}. 

In this section, we prove that $\CR_{n}$ requires exponential-size proofs in tree-like and regular $\MRes$. Recall that no simulation is known between tree-like and regular $\MRes$, so these statements require separate proofs. We believe that $\CR_n$ requires exponential size refutations in general $\MRes$ as well, but we have not been able to prove this.

We first define the formulas:
\begin{restatable}{definition}{restateCRdef}
	The Completion Principle formulas $\CR_n$ \cite{Janota-Expansion-vs-QRes-TCS15} are defined as follows: 
	\begin{equation*}
		\CR_{n} = \displaystyle \mathop{\mathlarger{\exists}}_{i,j \in [n]} x_{ij}, \mathop{\mathlarger{\forall}} z, \mathop{\mathlarger{\exists}}_{i \in [n]} a_i, \mathop{\mathlarger{\exists}}_{j \in [n]} b_j.\ \left( \mathop{\mathlarger{\wedge}}_{i,j \in [n]} \left( A_{ij} \wedge B_{ij} \right) \right) \wedge L_A \wedge L_B
	\end{equation*}
	where
	$A_{ij} = x_{ij} \vee z \vee a_i$, 
	$B_{ij} = \overline{x_{ij}} \vee \overline{z} \vee b_j$,
	$L_A = \overline{a_1} \vee \cdots \vee \overline{a_n}$, and  
	$L_B = \overline{b_1} \vee \cdots \vee \overline{b_n}$.
	
	Let $X,A,B$ denote the variable sets $\{x_{ij}: i,j\in [n]\}$, $\{a_i:
	i\in [n]\}$, and $\{b_j: j\in [n]\}$.  It is convenient to think of
	the $X$ variables as arranged in an $n\times n$ matrix.
\end{restatable}

Intuitively, the formulas describe a completion game, played on the matrix
$$
\left(
\begin{array}{ccccccc}
	a_1 & \dots & a_1 & \dots & a_n & \dots & a_n\\
	b_1 & \dots & b_n& \dots & b_1 & \dots & b_n\\
\end{array}
\right)
$$
where the $\exists$-player first deletes exactly one cell per column and the $\forall$-player then chooses one row. The 
$\forall$-player wins if his row contains all of $A$ or all of $B$ (cf.\ \cite{Janota-Expansion-vs-QRes-TCS15}). In the formalisation, an assignment to the variable $x_{ij}$ indicates the choices of the first player in the column containing $a_i$ and $b_j$. 

\subsubsection{Lower bound for tree-like $\MRes$}
For the QBF $\CR_n$, the winning strategy for the universal
player (countermodel) is not unique. However, we show that all countermodels require large decision trees. 

\begin{lemma}
	\label{lem:treeMRes-CR}
	Every countermodel for $\CR_n$ has decision tree size complexity at least $2^n$. 
\end{lemma}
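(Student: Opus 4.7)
The plan is to show that, for any valid countermodel $h$ of $\CR_n$, every leaf of every decision tree computing $h$ must correspond to a partial assignment fixing at least $n$ variables, from which the $2^n$ bound follows by a direct volume argument.

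First, I would extract the game-theoretic constraints on any valid $h \colon \langle X \rangle \to \{0,1\}$: if $X$ has an all-ones row $i^*$, then setting $z = 0$ lets the $\exists$-player satisfy all clauses (take $a_{i^*}=0$, all other $a_i=1$, and any $b$'s with some $b_j=0$), so the countermodel must play $h(X) = 1$; symmetrically, if $X$ has an all-zeros column, then $h(X) = 0$ is forced. Let $\mathcal{S}_1$ and $\mathcal{S}_0$ denote these two forced sets. They are disjoint, since no matrix simultaneously has an all-ones row and an all-zeros column (the intersecting entry would need to be both $1$ and $0$).

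Next I would fix any decision tree $T$ computing $h$ and any leaf $\ell$ of $T$, with associated partial assignment $\rho_\ell$ (the queries and responses along the root-to-$\ell$ path) and output bit $b_\ell$. Every total extension of $\rho_\ell$ must have $h$-value $b_\ell$. If $b_\ell = 1$, no extension may lie in $\mathcal{S}_0$; suppose for contradiction that some column $j$ contains no variable fixed to $1$ by $\rho_\ell$. Extending $\rho_\ell$ by setting every unfixed variable of column $j$ to $0$ (and the remaining variables arbitrarily) then produces an input with column $j$ all-zero, hence in $\mathcal{S}_0$, contradicting $b_\ell = 1$. Therefore $\rho_\ell$ fixes at least one variable to $1$ in each of the $n$ columns, giving $|\rho_\ell| \geq n$. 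The symmetric argument for $b_\ell = 0$: no extension may lie in $\mathcal{S}_1$, so $\rho_\ell$ must fix at least one variable to $0$ in each of the $n$ rows, again giving $|\rho_\ell| \geq n$.

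Finally, since the leaves' subcubes partition $\langle X \rangle$, which has cardinality $2^{n^2}$, and each subcube has size $2^{n^2 - |\rho_\ell|} \leq 2^{n^2 - n}$, the number of leaves of $T$ is at least $2^{n^2}/2^{n^2 - n} = 2^n$, as claimed. I do not anticipate any real technical obstacles; the entire argument rests on the single observation that the \emph{two-sided} forcing in $\CR_n$ (all-ones row forcing $h=1$, all-zeros column forcing $h=0$) prevents any monochromatic subcube from fixing fewer than $n$ entries.
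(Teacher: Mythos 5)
Your proof is correct, and it reaches the same intermediate conclusion as the paper --- every root-to-leaf path must query at least $n$ of the $n^2$ variables --- but by a noticeably different route. The paper argues locally: if a path queries fewer than $n$ variables, some row $k$ \emph{and} some column $\ell$ are entirely unqueried, and one then builds a single carefully chosen assignment $\sigma$ (row $k$ all ones, column $\ell$ all zeros, $a_k=b_\ell=0$, everything else one) under which only $A_{k\ell}$ and $B_{k\ell}$ survive as $x_{k\ell}\vee z$ and $\overline{x_{k\ell}}\vee\overline{z}$; since the two completions $x_{k\ell}=0,1$ follow the same path, $h$ gives them the same value, and either way one completion satisfies everything. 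You instead first isolate the semantic forcing sets $\mathcal{S}_1$ (all-ones row forces $h=1$) and $\mathcal{S}_0$ (all-zeros column forces $h=0$) --- both implications check out against the clauses $A_{ij}, B_{ij}, L_A, L_B$ --- and then run a per-leaf covering argument that depends on the output bit: a $1$-leaf must fix a $1$ in every column, a $0$-leaf must fix a $0$ in every row, and since columns (resp.\ rows) are disjoint variable sets this pins down $n$ queried variables. Your version is somewhat more modular (the countermodel semantics and the decision-tree combinatorics are cleanly separated, and the leaf-level statement is a refinement of the paper's path-level one), at the cost of a case split on the leaf's value; the paper's single two-point adversary handles both cases at once. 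The final counting also differs cosmetically --- you use the subcube-partition/volume argument over the $2^{n^2}$ inputs, the paper uses the fact that a binary tree with all leaves at depth at least $n$ has at least $2^n$ nodes --- but both yield the stated $2^n$ bound.
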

\begin{proof}
	We prove the size bound by showing that in every decision tree for every countermodel, all root-to-leaf paths query at least $n$ variables, and hence the decision tree has at least $2^n$ nodes.
	
	Assume to the contrary that some countermodel $h$ is computed by a decision tree $M$ that has  a root-to-leaf path $p$ querying less than $n$ variables. Then there exist $k,\ell \in [n]$ such that no variable from Row $k$ and no variable from Column $\ell$ is on this path.
	Let $\rho_p$ be the minimal partial assignment that takes this path in $M$, and let $\rho'$ be an arbitrary extension of $\rho_p$ to variables in $\{x_{ij} \mid i \neq k, j \neq \ell \}$. Consider the following extension of $\rho'$ to variables in $\left(X \setminus \{x_{k\ell}\}\right) \cup T$, giving assignment $\sigma$: \\
	Set all variables in row $k$ (other than $x_{k,\ell}$) to 1.\\
	Set all variables in column $\ell$ (other than $x_{k,\ell}$) to 0.\\
	Set $a_k$ and $b_\ell$ to 0 and all other $a_i,b_j$ variables to 1. 
	
	For $n\ge 2$, $\sigma$ satisfies all the clauses of
	$\CR_n$ except $A_{k\ell}$ and $B_{k\ell}$, which get restricted to $x_{k\ell}\vee z$ and $\overline{x_{k\ell}}\vee \overline{z}$ respectively.
	
	Let $\alpha_0 = \sigma \cup \{x_{k\ell}=0\}$ and $\alpha_1 = \sigma \cup \{x_{k\ell}=1\}$. Since both $\alpha_0$ and $\alpha_1$ extend $\rho_p$, they follow path $p$, therefore $h(\alpha_0) = h(\alpha_1)$.
	If $h(\alpha_0) = h(\alpha_1) = 0$, then
	$(\alpha_1, h(\alpha_1))$ satisfies all clauses of $\CR_n$. On the other hand, if $h(\alpha_0) = h(\alpha_1) = 1$, then 
	$(\alpha_0, h(\alpha_0))$ satisfies all clauses of $\CR_n$. Thus in either case, $h$ is not a countermodel for $\CR_n$. 
\end{proof}

From \Cref{thm:lb-from-strategy-extraction} and \Cref{lem:treeMRes-CR}, we obtain the desired lower bound. 
\begin{theorem}
	\label{thm:treeMRes-CR}
	Every tree-like $\MRes$ refutation of $\CR_n$ formulas has size at least $2^n$.
\end{theorem}

\subsubsection{Lower bound for regular $\MRes$}

We now show that these formulas require exponential-size refutations
in regular $\MRes$.  The idea of
using \Cref{thm:lb-from-strategy-extraction} does not work here, since
there is a winning strategy for the universal player with a small
read-once branching program. (The strategy is to choose $z=0$ exactly
if $X$ has an all-0s column.)
We therefore directly analyse the combinatorial structure of a
refutation to show that it must be large.

\begin{theorem}
	\label{thm:CR-regular-lower-bound}
	Every $(A\cup B)$-regular refutation of $\CR_n$ in $\MRes$, and hence every regular $\MRes$ refutation,  has size at least 	$2^{n-1}$.
\end{theorem}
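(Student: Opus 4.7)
The plan is to adapt the boundary argument from \cref{thm:T-regular-lb} to $\CR_n$, with $A \cup B$ playing the role that $T$ plays for $\LQParity_n$. Let $\Pi$ be a $(A \cup B)$-regular refutation. Define $\mathcal{S}'$ to be the set of lines whose clause contains no variable of $A \cup B$ and which has a path to the empty-clause line in $G_\Pi$ through such lines, and let $\mathcal{S}$ denote the leaves of $G_\Pi$ restricted to $\mathcal{S}'$. Because every axiom of $\CR_n$ contains a variable of $A \cup B$, the sub-DAG induced by $\mathcal{S}'$ forms a propositional resolution derivation (over the $n^2$ variables in $X$) of the empty clause from the CNF $\mathcal{C} = \{C : (C,M) \in \mathcal{S}\}$. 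If we establish that every clause in $\mathcal{C}$ has width at least $n$, a standard counting argument then yields $|\mathcal{C}| \geq 2^{n^2}/2^{n^2-n} = 2^n$, giving $|\Pi| \ge |\mathcal{S}| \ge 2^n$.

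To prove the width bound, take any $L = \res(L', L'', v) \in \mathcal{S}$. As in the $\LQParity$ argument, the boundary property forces $v \in A \cup B$ and each antecedent to contain only the pivot literal as its sole $A \cup B$ literal. Assume $v = a_i$ (the case $v = b_j$ is symmetric). Since $L_A$ is the only source of $\overline{a_i}$, the subderivation of $L''$ uses $L_A$, and by $(A \cup B)$-regularity each $\overline{a_k}$ with $k \ne i$ is eliminated by a single use of $a_k$ as pivot with some $A_{k,j_k}$. On the other hand, $L'$ cannot use $L_A$: that would force $\overline{a_i}$ to coexist with $a_i \in L'$ (forming a tautology) unless pivoted, but $L$ already pivots on $a_i$ and regularity forbids a second use of $a_i$ as pivot on any path.

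The crux is the merge-map analysis at $L$. Since $v = a_i$ lies right of $z$, the MRes rule forbids complex merging at this step, so $M'$ and $M''$ must be trivial-combined or isomorphic. Neither is trivial (each subderivation contains an axiom with a simple constant merge map for $z$), so $M' \simeq M''$. The key observation is that any complex branching present in $M'$ can only query variables $x_{i,\cdot}$ from row $i$: positive $x_{k,j}$ literals come exclusively from $A_{k,j}$, and $L'$ cannot use any $A_{k,\cdot}$ with $k \ne i$ (this would introduce an unresolvable $a_k$, since $L'$ does not use $L_A$). Symmetrically, complex branching in $M''$ can only query variables $x_{k,\cdot}$ with $k \ne i$, since regularity precludes $L''$ from using $A_{i,\cdot}$. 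As these query-variable sets are disjoint, $M' \simeq M''$ forces both merge maps to be the simple constant $0$. This in turn forces $L'$ to be a single axiom $A_{i,j_i}$ and $L''$ to be built from $L_A$ together with at least one $A_{k,j_k}$ per $k \ne i$ (no $B$-axioms or $L_B$, which would produce a simple-$1$ component unable to combine with simple $0$ without complex branching). The clause of $L$ therefore contains $x_{i,j_i}$ and, for each $k \ne i$, at least one literal $x_{k,\cdot}$, giving width at least $n$. The main obstacle is making the ``row $i$ versus rows $\ne i$'' merge-map analysis rigorous in the general DAG setting, where the subderivations of $L'$ and $L''$ may have many branches and intermediate resolutions to track.
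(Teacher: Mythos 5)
Your proposal is correct in substance and shares the paper's skeleton exactly: the boundary set $\mathcal{S}$ of $(A\cup B)$-free lines whose parents still carry $A\cup B$ variables, a width lower bound on the boundary clauses, and the counting argument over assignments to $X$ (the paper proves width $\geq n-1$ and concludes $|\Pi|\geq 2^{n-1}$, which is just as good as your $2^n$). Even the first half of your width argument coincides with the paper's Lemma~\ref{lem:CR-regular-width}: the pivot is some $a_i$ (or $b_j$), regularity excludes $L_A$ from the subderivation of the antecedent carrying the positive literal and excludes $A_{i,\cdot}$ from the other, so the two merge maps for $z$ can only query disjoint sets of $X$-variables, and the isomorphism forced by the pivot lying right of $z$ kills all branching. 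Where you genuinely diverge is the finish. The paper stops at ``$x_{kj}\notin\var(M_2^z)$ for $k\neq i$'' and concludes \emph{semantically}: if $C_2$ missed an entire row and an entire column, one constructs two assignments differing only on $x_{ij}$, both falsifying $C_2$ and both receiving the same value from $M_2^z$, one of which then satisfies every axiom used, contradicting Lemma~\ref{ref:soundness}. You instead push on \emph{syntactically}, pinning down the axioms of both subderivations ($L'$ is a single $A_{i,j_i}$; $\Pi_{L''}$ uses $L_A$ plus one $A_{k,j_k}$ per $k\neq i$ and no $B$-axioms) and reading off which literals must survive. This buys a marginally stronger conclusion (width $n$, every row represented) but costs exactly the bookkeeping you flag: you must show that a simple merge map at $L'$ or $L''$ rules out \emph{any} $B$-axiom anywhere in the sub-DAG, which needs a first-meeting-point argument (a $0$-branch and a $1$-branch must eventually be combined, and since $0\not\simeq 1$ this can only happen by merging on an $X$-pivot, after which the map is complex forever), plus the observation that the $A_{i,\cdot}$ axioms alone admit no resolution steps so $L'$ is literally an axiom. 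Those details do go through, so both routes are sound; the paper's soundness-based finish is shorter precisely because it never needs to determine which axioms are actually used downstream of $L_A$.
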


The high level idea is the same as the $\LQParity$ lower bound in \Cref{subsec:LQParity-lb}.
Let $\Pi$ be a $(A\cup B)$-regular
\MRes\ refutation of $\CR_n$. Since every axiom has a variable from
$A\cup B$ while the final clause in $\Pi$ is empty, there is a maximal
``component'' (say $\mathcal{S}$) of the proof leading to and including the final line,
where all clauses are $(A\cup B)$-free.  The clauses in this component
involve only the $X$ variables. We show that the ``boundary'' $\partial{\mathcal{S}}$ of this
component is large, by showing in \Cref{lem:CR-regular-width} that
each clause here must be wide.  (This idea was used in
\cite{PeitlSS19a} to show that $\CR$ is hard for reductionless
$\LDQRes$.)

To establish the width bound, we first note that except for the axioms
$L_A,L_B$, no lines have trivial strategies.  Since the pivots at the
boundary are variables from $A\cup B$, which are all to the right of
$z$, the merge maps incoming into each boundary resolution must be
isomorphic.  By analysing what axiom clauses cannot be used to derive
lines just above the boundary, we show that many variables are absent
in the corresponding merge maps, and invoking soundness of $\MRes$, we
show that they must then be present in the boundary clause, making it
wide.

\begin{proof}[Proof of \Cref{thm:CR-regular-lower-bound}]
	Let $\Pi$ be an $(A\cup B)$-regular refutation of $\CR_n$ (for $n \geq
	2$) in $\MRes$.
	
	The lines of $\Pi$ will be denoted by $L$, $L_1$, $L_2$, $L'$, $L''$ etc. For lines $L$, $L_i$ and $L'$, and universal variable $z$; the respective clause, merge map and the function computed by the merge map will be denoted by $C$, $M^z$, $h^z$, $C_i$, $M^z_i$, $h^z_i$ and $C'$, $(M')^z$, $(h')^z$ respectively. However, since $\CR_n$ has a single universal variable, we avoid the superscript $z$ for the merge-maps and the corresponding functions. So, for these formulas, they will be denoted simply by $M$ $M_i$, $M'$, $h$, $h_i$ and $h'$.
	
	Define $\mathcal{S}$ to be the set of those lines in $\Pi$ where the
	clause part has no variable from $A\cup B$, and furthermore there is a
	path in $G_\Pi$ from the line to the final empty clause via lines
	where all the clauses also have no variables from $A\cup B$. Let
	$\partial{\mathcal{S}}$, called the boundary of $\mathcal{S}$, denote the set of leaves in the subgraph of $G_\Pi$
	restricted to $\mathcal{S}$; these are lines that are in
	$\mathcal{S}$ but their parents are not in $\mathcal{S}$.  Note that
	no leaf of $\Pi$ is in $\mathcal{S}$ because all leaves of $G_\Pi$
	contain a variable in $A\cup B$.

	By definition, for each $L=(C,M) \in \mathcal{S}$, we have  $\var(C) \subseteq X$.
	The sub-derivation $\widehat{\Pi} = \{ C \mid L=(C,M) \in \mathcal{S}
	\}$ contains a propositional resolution refutation of the conjunction
	of clauses $\mathcal{C} = \{ C \mid L=(C,M) \in \partial{\mathcal{S}}\}$. Hence $\mathcal{C}$ is
	an unsatisfiable CNF formula over the $n^2$ variables in $X$. We show
	below, in \Cref{lem:CR-regular-width}, that each clause in $\mathcal{C}$ has
	width at least $n-1$. Hence it is falsified by at most $2^{n^2-(n-1)}$
	assignments. Therefore, to ensure that each of the $2^{n^2}$
	assignments falsifies some clause, at least $2^{n-1}$ clauses are
	required. Therefore $\card{\mathcal{C}} \geqslant 2^{n-1}$. Hence $\card{\Pi} =
	2^{\Omega(n)}$.
\end{proof}
	
	To conclude the proof of \Cref{thm:CR-regular-lower-bound}, it remains to prove the following lemma:
\begin{lemma}
	\label{lem:CR-regular-width}
	For all $L = (C, M) \in \partial{\mathcal{S}}$, $\width(C) \geq n-1$.
\end{lemma}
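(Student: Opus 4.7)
The plan is to exploit the $(A \cup B)$-regularity of $\Pi$ together with the merge-map isomorphism requirement at resolution steps whose pivot lies right of $z$; these two constraints rigidly determine $\Pi_L$, and the count of $X$-literals surviving into $C$ then falls out directly.

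Write $L = \res(L', L'', v)$. Since $C$ has no $A \cup B$ literal and resolution removes only the pivot, $C'$ and $C''$ contain $v, \overline{v}$ as their only $A \cup B$ literals, so $v \in A \cup B$; by symmetry, I take $v = a_i$. Regularity forbids $a_i$ from appearing as a pivot inside $\Pi_{L'}$ or $\Pi_{L''}$, which immediately restricts which axioms are usable. Since $a_i$ appears positively only in the $A_{i,\cdot}$ axioms, some $A_{i, j}$ must be used in $\Pi_{L'}$; but $L_A$ cannot be used there, since its $\overline{a_i}$ literal would need the forbidden pivot, and likewise $A_{k,\cdot}$ with $k \neq i$ cannot, since the $a_k$ it would introduce could only be eliminated via $\overline{a_k}$ from the now-forbidden $L_A$. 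Symmetrically, $\Pi_{L''}$ must contain $L_A$ together with at least one $A_{k, j_k}$ for every $k \neq i$ (to eliminate $\overline{a_k}$), while $A_{i,\cdot}$ is excluded.

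Next I would rule out $B$-axioms and $L_B$ on both sides via the merge-map isomorphism. Because $v \in A \cup B$ lies right of $z$ and the merge maps of $L'$ and $L''$ are both non-trivial (each sub-derivation contains an $A$-axiom carrying the simple map $z=0$), the $\MRes$ resolution rule forces them to be isomorphic. A $B$-axiom (simple map $z = 1$) can combine with an $A$-axiom (simple map $z = 0$) in a sub-derivation only by building a complex map branching on an $X$-variable appearing with opposite polarities in the two sub-clauses. In $\Pi_{L'}$ the only positive source of an $X$-literal is an $A_{i,\cdot}$ axiom, so any candidate branching variable has the form $x_{i, \cdot}$; in $\Pi_{L''}$ the positive sources are the $A_{k,\cdot}$ axioms with $k \neq i$, so any candidate branching variable there has the form $x_{k, \cdot}$ with $k \neq i$. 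These two families of variables are disjoint, so the isomorphism forces both merge maps to be the simple constant $0$, which in turn forbids any $B$-axiom (combining $0$ and $1$ would demand complex branching), and hence also $L_B$ (its $\overline{b_j}$ literals would then have nothing to resolve against).

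Consequently $\Pi_{L'}$ reduces to a single axiom $A_{i, j_*}$ (two distinct $A_{i,\cdot}$ axioms share only the literal $a_i$ with the same polarity, so cannot be resolved together), while $\Pi_{L''}$ is built from $L_A$ together with one $A_{k, j_k}$ for each $k \neq i$. Hence $C$ contains $x_{i, j_*}$ as well as one literal $x_{k, j_k}$ for every $k \neq i$; these have pairwise distinct first indices and therefore represent distinct $X$-variables, giving $\width(C) \geq 1 + (n-1) = n \geq n-1$. The case $v = b_j$ is symmetric under swapping $A \leftrightarrow B$. The main obstacle is the third paragraph: precisely tracking how the simple maps $0$, $1$ and the trivial map $\ast$ compose under the $\MRes$ resolution rule, and verifying the disjointness of the candidate branching variables on the two sides; once this is pinned down, the structural collapse and the width count are routine.
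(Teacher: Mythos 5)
Your proof is correct, and it reaches the width bound by a genuinely different route from the paper in its second half. The opening moves coincide: both arguments use $(A\cup B)$-regularity to forbid $L_A$ (and hence all $A_{kj}$ with $k\neq i$) from the sub-derivation of the antecedent containing the positive pivot literal, and both then invoke the isomorphism $M'^z\simeq M''^z$ forced by the pivot lying right of $z$. From there the paper stops short of determining the sub-derivations: it only transfers the conclusion ``no $x_{kj}$ with $k\neq i$ occurs in $\var(M''^z)$'' across the isomorphism, and then proves the width bound \emph{semantically}, via Lemma~\ref{ref:soundness} --- constructing two assignments differing only in one cell $x_{ij}$, both falsifying $C_2$ and both receiving the same strategy value, one of which would satisfy every axiom. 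You instead push the syntactic analysis to completion: observing that $X$-pivots in each sub-derivation are confined to disjoint families ($x_{i\cdot}$ on one side, $x_{k\cdot}$ with $k\neq i$ on the other), the isomorphism collapses both merge maps to simple constants, which (since a constant-$0$ and a constant-$1$ leaf can only ever combine into a complex map, and complexity never decreases along a derivation) excludes all $\mathcal{B}$-axioms and $L_B$, pinning $\Pi_{L'}$ down to a single axiom and $\Pi_{L''}$ to $L_A$ plus one $A_{k,j_k}$ per $k\neq i$. Counting the surviving positive $X$-literals then gives $\width(C)\ge n$, marginally stronger than the paper's $n-1$, with no semantic argument needed. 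What each approach buys: yours yields a complete structural characterisation of the boundary lines and avoids invoking soundness, at the cost of the somewhat delicate induction on how simple maps $0$, $1$ and $*$ compose under the $\MRes$ rules (which you correctly flag as the main thing to pin down); the paper's soundness-based claim is less informative but more robust, as it does not depend on fully classifying which axioms can appear below the boundary.
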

\begin{proof}
	Since $\var(C) \cap (A \cup B) = \emptyset$, we know that $L$ is not a leaf of $\Pi$. Say $L = \res(L_1,L_2,v)$ where $L_1 = (C_1, M_1)$ and $L_2 = (C_2, M_2)$. Since $\var(C_1) \cap (A \cup B) \neq \emptyset$ and $\var(C_2) \cap (A \cup B) \neq \emptyset$, we have $v \in A \cup B$. Consider the case when $v\in A$; the argument for the case when $v\in B$ is symmetric. Without loss of generality, assume that $v = a_n$, and $a_n \in C_1$ and $\overline{a_n} \in C_2$.

	Since $\Pi$ is $(A\cup B)$-regular, $a_n$ does not occur as a pivot
	in the sub-derivation $\Pi_{L_1}$. Therefore $L_A \not\in
	\leaves(G_{\Pi_{L_1}})$ (otherwise $\overline{a_n} \in C_1$, and
	therefore $C_1$ would be tautological clause, a contradiction).
	This implies that the sub-derivation $\Pi_{L_1}$ cannot use any
	axiom that contains a positive literal in $A$ other than $a_n$, since
	such a literal would have to be eliminated by resolution before
	reaching $C_1$, requiring the corresponding negated literal, and
	$L_A$ is the only axiom with negated literals from $A$. That is,
	$\Pi_{L_1}$ does not use any of the axioms $A_{ij}$ for $i\in[n-1]$. The
	positive literal $x_{ij}$ appears only in $A_{ij}$. Hence for $i\in[n-1]$,
	$j\in[n]$, $x_{ij}$ is not a pivot in $\Pi_{L_1}$ and hence does not
	appear in $M_1$. On the other hand, $M_1$ is not trivial since
	some $A_{nj}$ clause is used.

	The clause $C_2$ contains $\overline{a_n}$, but no other $\overline{a_i}$. So
	$C_2$ is not the axiom $L_A$. Hence $M_2$ is not trivial.
	
	Since the pivot $a_n$ at the step obtaining line $L$ is to the right
	of $z$, by the rules of $\MRes$, $M_1$ and $M_2$ are
	isomorphic. Hence for each $i\in[n-1]$, and each $j\in[n]$, $x_{ij}
	\not\in \var(M_2)$.  We claim the following:
	\begin{claim}
		Either for all $i\in[n-1]$, $C_2$ has a variable of the form
		$x_{i*}$, or for all $j\in[n]$, $C_2$ has a variable of the
		form $x_{*j}$. In either case, $C_2$ has at least $n-1$ variables.
	\end{claim}
		We know that $\overline{a_n}\in C_2$, and for all $i\in[n-1]$, for all $j\in
		[n]$, $x_{ij}\not\in\var(M_2)$. Aiming for contradiction, suppose that
		there exist $i\in[n-1]$ and $j\in[n]$ such that for all $\ell \in [n]$,
		$x_{i\ell}\not\in \var(C_2)$, and for all $k\in[n]$,
		$x_{kj} \not\in \var(C_2)$.
		Fix such an $i,j$.
		
		Let $\rho$ be the minimum partial assignment falsifying $C_2$.
		Then \begin{itemize}
			\item $\rho$ sets $a_n=1$, leaves all other variables in $A\cup B$ unset.
			\item $\rho$ does not set any $x_{i\ell}$ or $x_{kj}$.
		\end{itemize}
		For $c\in \{0,1\}$, extend $\rho$ to $\alpha_c$ as follows: Set
		$a_i=0, b_j=0$, set all other unset variables from $A\cup B$ to 1.
		Set $x_{ij}=c$.  All $x_{i\ell}$ other than $x_{ij}$ set to 1.
		All $x_{kj}$ other than $x_{ij}$ set to 0.  Set remaining
		variables arbitrarily (but in the same way in $\alpha_0$ and
		$\alpha_1$).
		
		The common part of $\alpha_0$ and $\alpha_1$ satisfies
		all axiom clauses except $A_{ij}$ and $B_{ij}$, and does not
		falsify any axiom. The extensions $\alpha_c$ satisfy one more
		axiom, and still do not falsify the remaining axiom (it has a
		universal literal $z$ or $\overline{z}$). They both falsify $C_2$,
		since they extend $\rho$.
		
		Since $\alpha_0$ and $\alpha_1$ agree everywhere except on
		$x_{ij}$, and since $x_{ij}\not\in\var(M_2)$, it follows that
		$M_2(\alpha_0) = M_2(\alpha_1) = d$, say. 
		
		By \Cref{ref:soundness}, both $(\alpha_0,d)$ and $(\alpha_1,d)$
		should falsify some axiom.  However, $(\alpha_{\bar{d}},d)$
		actually satisfies all axioms, a contradiction.
This completes the proof of the claim, and hence of the lemma as well.
\end{proof}

\subsection{The \texorpdfstring{$\KBKFlq$}{KBKF-lq} formulas}
\label{subsec:KBKFlq-lb}
In this \namecref{subsec:KBKFlq-lb}, we turn towards 
the $\KBKFlq$ formulas, defined in \cite{BWJ-SAT14}. 
These formulas are variants of the $\KBKF$ formulas defined in \cite{KBKF95}.
The $\KBKF$ formulas and their variants are significant in QBF proof complexity. The $\KBKF$ formulas  were  used to prove the first lower bound in QBF proof complexity, for the proof system $\QRes$ \cite{KBKF95}.
 They have short refutations in  $\LDQRes$ \cite{ELW13} (that paper uses the name $\varphi_t$) and in $\QURes$ \cite{Gelder12}. 
Variants of this formula were then used to show non-simulations beteen $\QURes$, $\LDQRes$, $\IR$, and others. The specific variant of interest to us here is 
 $\KBKFlq$. This variant was constructed in \cite{BWJ-SAT14} to obtain formulas hard for $\LDQRes$.  It is known that the $\KBKFlq$ formulas are hard for $\LDQRes$ \cite{BWJ-SAT14} and for $\IRM$ \cite{BCJ-ToCT-Sep2019} but have polynomial-size refutations in $\QURes$ \cite{BWJ-SAT14}.

Here we show that the $\KBKFlq$ formulas are hard for the full system of Merge Resolution, thus making it our strongest lower bound in the paper. This constitutes the first genuine-to-QBF lower bound for unrestricted $\MRes$ in the literature.

\begin{definition}
	The $\KBKFlq_n$ formulas \cite{BWJ-SAT14} consist of the quantifier prefix
	\[\exists d_1, e_1, \forall x_1, \exists d_2, e_2, \forall x_2, \ldots, \exists d_n, e_n, \forall x_n, \allowbreak \exists f_1, f_2, \ldots, f_n\]
	and the clauses
	\begin{IEEEeqnarray*}{lClClClCr}
		A_0 &=& \{\overline{d_1}, \overline{e_1}, \overline{f_1}, \ldots, \overline{f_n}\}\\
		A^d_i &=& \{d_i, x_i, \overline{d_{i+1}}, \overline{e_{i+1}}, \overline{f_1}, \ldots, \overline{f_n}\}& \quad &
		A^e_i &=& \{e_i, \overline{x_i}, \overline{d_{i+1}}, \overline{e_{i+1}}, \overline{f_1}, \ldots, \overline{f_n}\} & \qquad & \forall i \in [n-1]\\
		A^d_n &=& \{d_n, x_n, \overline{f_1}, \ldots, \overline{f_n}\} & \quad &
		A^e_n &=& \{e_n, \overline{x_n}, \overline{f_1}, \ldots, \overline{f_n}\}\\
		B^0_i &=& \{x_i, f_i, \overline{f_{i+1}}, \ldots \overline{f_n} \} &\quad &
		B^1_i &=& \{\overline{x_i}, f_i, \overline{f_{i+1}}, \ldots \overline{f_n} \} & \qquad & \forall i \in [n-1]\\
		B^0_n &=& \{x_n, f_n\} & \quad &
		B^1_n &=& \{\overline{x_n}, f_n\}
	\end{IEEEeqnarray*}
\end{definition}

Note that the existential part of each clause in $\KBKFlq_n$ is a
Horn clause (at most one positive literal), and except $A_0$, is even
strict Horn (exactly one positive literal).

We use the following shorthand notation.
Sets of variables: $D =  \{d_1, \ldots, d_n\}$, $E =  \{e_1, \ldots, e_n\}$, $F =  \{f_1, \ldots, f_n\}$, and 
$X =  \{x_1, \ldots, x_n\}$.
Sets of literals: For $Y\in \{D,E,X,F\}$, set 
$Y^1 = \{ u \mid u\in Y\}$ and
$Y^0 = \{ \overline{u} \mid u\in Y\}$. 
Sets of clauses:
\[\begin{array}[t]{lclclcl}
	\mathcal{A}_0 &=& \{A_0\}\\
	\mathcal{A}_i &=& \{A^d_i, A^e_i\} \quad \forall i \in [n] & \quad \quad & \mathcal{B}_i &=& \{B^0_i, B^1_i\} \quad \forall i \in [n]\\
	\mathcal{A}_{[i,j]} &=& \cup_{k \in [i,j]} \mathcal{A}_k \quad \forall i,j \in [0,n], i \le j & \quad & \mathcal{B}_{[i,j]} &=& \cup_{k \in [i,j]} \mathcal{B}_k \quad \forall i,j \in [n], i \le j\\
	\mathcal{A} &=& \mathcal{A}_{[0,n]} & \quad & \mathcal{B} &=& \mathcal{B}_{[1,n]} 
\end{array}
\]

\begin{theorem}
  \label{thm:mres-lb}
Every $\MRes$ refutation of $\KBKFlq_n$ has size at least $2^n$.
\end{theorem}

This proof follows the same high-level idea as the proofs of \Cref{thm:LQParity-lb,thm:CR-regular-lower-bound}. Namely, in any refutation, a maximal component is identified (in this case, the $F$-free component, where no clause has a variable from $F$) and its boundary is shown to be large. However, the idea for showing that the boundary is large is completely different. The proofs of 
\Cref{thm:LQParity-lb,thm:CR-regular-lower-bound} established that the boundary  clauses must be wide. Here, the complexity measure for the boundary is not width but the nature of the merge maps, or equivalently, of the partial strategies. We identify a property called {\em self-dependence} which captures the right complexity; a merge map for $x_i$ has this self-dependence property if it depends on at least one of $d_i,e_i$. We show that all merge-maps at the final line must have self-dependence, whereas at the  boundary lines none of the merge maps have self-dependence. 
We use this to then  conclude that there must be exponentially many lines.

To show that self-dependence is not possible outside the $F$-free
component, we show that from a line with $F$-variables and at least
one self-dependent strategy, the $F$-variables can never be removed.

Elaborating on the roadmap of the argument: 
Let $\Pi$ be an $\MRes$ refutation of $\KBKFlq_n$.  Each line in $\Pi$
has the form $L=(C,M^{x_1},\ldots,M^{x_n})$ where $C$ is a clause over
$D,E,F$, and each $M^{x_i}$ is a merge map computing a strategy for $x_i$.

Define $\mathcal{S}$ to be the set of those lines in $\Pi$ where the
clause part has no $F$ variable and furthermore the line has a path in
$G_\Pi$ to the final empty clause via lines where all the clauses also
have no $F$ variables. Let $\partial{\mathcal{S}}$, called the boundary of $\mathcal{S}$, denote the set of leaves in
the subgraph of $G_\Pi$ restricted to $\mathcal{S}$; these are lines
that are in $\mathcal{S}$ but their parents are not in
$\mathcal{S}$.  Note that by definition, for each $L=(C,\{M^{x_i}\mid
i\in[n]\}) \in \mathcal{S}$, $\var(C) \subseteq D \cup E$. No line in
$\mathcal{S}$ (and in particular, no line in $\partial{\mathcal{S}}$) is an
axiom since all axiom clauses have variables from $F$.

Recall that the variables of $\KBKFlq_n$ can be naturally grouped
based on the quantifier prefix: for $i\in[n]$, the $i$th group has
$d_i,e_i, x_i$, and the $(n+1)$th group has the $F$ variables. By
construction, the merge map for $x_i$ does not depend on variables in
later groups, as is indeed required for a countermodel.  We say that a
merge map for $x_i$ has {\em self-dependence} if it does depend on $d_i$
and/or $e_i$.

We show that every merge map at every line in $\mathcal{S}$ is non-trivial
(\Cref{lem:S'-strategies-nontrivial}).  Further, we show that at
every line on the boundary of $\mathcal{S}$, i.e.~in $\partial{\mathcal{S}}$, no merge map has
self-dependence (\Cref{lem:boundary-strategies}). Using this, we
conclude that $\partial{\mathcal{S}}$ must be exponentially large, since in every
countermodel the strategy of each variable must have self-dependence
(\Cref{prop:KBKF-strategies}).

In order to show that lines in $\partial{\mathcal{S}}$ do not have self-dependence, we first
establish several properties of the sets of axiom clauses used in a
sub-derivation
(\Cref{lem:posFliterals-and-uci,lem:empty-uci-structure,lem:nonempty-uci-structure,lem:self-dependent-strategy-suffix-interval}).

For a line $L \in \Pi$, let $\Pi_{L}$ be the minimal sub-derivation of
$L$, and let $G_{\Pi_{L}}$ be the corresponding subgraph of $G_{\Pi}$
with sink $L$. Let $\UsedConstraintInd(\Pi_{L}) = \{ i \in [0,n] \mid
\leaves(G_{\Pi_{L}}) \cap \mathcal{A}_i \neq
\emptyset\}$. ($\UsedConstraintInd$ stands for
UsedConstraintsIndex). Note that we are only looking at the clauses in
$\mathcal{A}$ to define $\UsedConstraintInd$.

\begin{lemma}
\label{lem:posFliterals-and-uci}
For every line $L = (C, \{M^{x_i} \mid i \in [n]\})$ of $\Pi$, 
$\card{C\cap F^1}\le 1$. Furthermore,
$\UsedConstraintInd(\Pi_L) = \emptyset\Leftrightarrow C \cap F^1 \neq \emptyset$. 
(Recall that $F^1 = \{f_1, f_2, \ldots, f_n\}$, i.e.~the set of positive literals over the variable set $\{f_1, f_2, \ldots, f_n\}$.)
\end{lemma}
\begin{proof}
	Since the existential part of each clause in $\KBKFlq_n$ is a Horn
	clause, and since the resolvent of Horn clauses is also Horn,
	$\card{C\cap F^1}\le 1$ for each line of $\Pi$. It thus suffices to
	prove that $\forall L \in \Pi, ~~\UsedConstraintInd(\Pi_L) = \emptyset
	\iff C\cap F^1 \neq \emptyset$.
	
	($\Rightarrow$): For an arbitrary line $L\in \Pi$, suppose
	$\UsedConstraintInd(\Pi_L) = \emptyset$, so $L$ is derived from
	$\mathcal{B}$. Since $\var_\exists(\mathcal{B})=F$, $\var(C) \subseteq
	F$. The existential part of these clauses is strict Horn, and the
	resolvent of strict Horn clauses is also strict Horn, so $C$ is strict
	Horn. So $C\cap F^1 \neq \emptyset$.
	
	($\Leftarrow$): The statement $C\cap F^1 \neq \emptyset \Rightarrow
	\UsedConstraintInd(\Pi_L) = \emptyset$ holds at all axioms.  Assume to
	the contrary that it does not hold everywhere in $\Pi$.  Pick a
	highest $L$ (closest to the axioms) for which this statement
	fails. That is, $C\cap F^1 \neq \emptyset$, and
	$\UsedConstraintInd(\Pi_L) \neq \emptyset$. Let $L',L''$ be the
	parents of $L$ in $\Pi$; by choice of $L$, both $L'$ and $L''$ satisfy
	the statement. Let $f_j$ be the positive literal in $C$ (unique,
	because $C$ is Horn).  Without loss of generality, $f_j\in C'$.  Since
	$L'$ satisfies the statement, $\UsedConstraintInd(\Pi_{L'}) =
	\emptyset$. So $\var(C')\subseteq F$, and since $C'$ is Horn, $C'
	\setminus \{f_j\} \subseteq F^0$. Since $f_j\in C$, the pivot at this
	step is not $f_j$, so it must be an $f_k$ for some $\overline{f_k}\in
	C'$. So $f_k\in C''$.  Since $L''$ satisfies the statement,
	$\UsedConstraintInd(\Pi_{L''}) = \emptyset$. But then 
	$\UsedConstraintInd(\Pi_{L}) = \UsedConstraintInd(\Pi_{L'})\cup
	\UsedConstraintInd(\Pi_{L''}) = \emptyset$, contradicting our choice
	of $L$. Hence our assumption was wrong, and the statement holds for
	all $L$ in $\Pi$.
\end{proof}

\begin{lemma}
  \label{lem:empty-uci-structure}
A line $L = (C, \{M^{x_i} \mid i \in [n]\})$ of $\Pi$ with
$\UsedConstraintInd(\Pi_L) = \emptyset$ has these properties:
\begin{enumerate}
\item \label{item:empty-uci-vars}
  $\var(C) \subseteq F$; for all  $i\in[n]$, $M^{x_i} \in \{*,0,1\}$;
\item \label{item:empty-uci-positive-map}
  For some $j\in [n]$, $f_j \in C$ and $M^{x_j} \in \{0,1\}$; such a $j$ is unique; 
\item \label{item:empty-uci-maps-left-of-positive}
  For the unique $j$ from (\ref{item:empty-uci-positive-map}), for $1 \le i < j$, $f_i \not\in \var(C)$ and $M^{x_i} = *$;
\item \label{item:empty-uci-maps-right-of-positive}
  For $j < i \le n$,
  if $f_i\not\in \var(C)$, then $M^{x_j} \in \{0,1\}$.
\end{enumerate}
\end{lemma}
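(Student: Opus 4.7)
The plan is to exploit the hypothesis $\UsedConstraintInd(\Pi_L) = \emptyset$, which says that $\Pi_L$ uses only $\mathcal{B}$-axioms. I would first draw two easy structural consequences: the existential variables of $\mathcal{B}$ lie in $F$, so $\var(C) \subseteq F$; and every pivot inside $\Pi_L$ is in $F$, hence quantified to the right of every universal variable $x_i$. Because complex merge maps in $\MRes$ can only be created when the pivot precedes the universal variable, no $M^{x_i}$ inside $\Pi_L$ is ever complex, so each $M^{x_i} \in \{*,0,1\}$, giving item 1. At a leaf the axiom $B^b_k$ produces $M^{x_k} = b$ and $M^{x_i} = *$ for $i \ne k$; at a resolution step two simple maps for the same $x_i$ can combine only if they are identical or one is trivial. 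A straightforward induction then yields the invariant: $M^{x_i}$ at $L$ equals $b \in \{0,1\}$ precisely when $B^b_i \in \leaves(G_{\Pi_L})$ and $B^{1-b}_i \notin \leaves(G_{\Pi_L})$, and equals $*$ when neither $B^0_i$ nor $B^1_i$ is a leaf. This invariant does the heavy lifting for the remaining items.

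For item 2, Lemma~\ref{lem:posFliterals-and-uci} delivers the unique positive literal $f_j \in C$, and since $f_j$ occurs positively only in $B^0_j$ and $B^1_j$, at least one of them appears as a leaf; by the invariant only one can, giving $M^{x_j} \in \{0,1\}$. For item 3, the key step is to argue that no $B^b_k$ with $k < j$ can be a leaf of $\Pi_L$: such an axiom contains $\overline{f_j}$ among its negative literals (since $j \in [k{+}1,n]$), so using it would inject $\overline{f_j}$ into the derivation, but $f_j \in C$ means no resolution on pivot $f_j$ ever occurred, so $\overline{f_j}$ could never be eliminated and $C$ would be tautological, a contradiction. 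In particular no $B^b_i$ with $i < j$ is a leaf, so $M^{x_i} = *$; and the literal $f_i$ for $i < j$ appears only in axioms $B^b_k$ with $k \le i < j$, none of which are used, so $f_i \notin \var(C)$.

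For item 4, reading the conclusion as $M^{x_i} \in \{0,1\}$ (the stated $M^{x_j}$ would merely restate item 2), suppose $j < i \le n$ and $f_i \notin \var(C)$. The axiom $B^b_j$, forced into $\leaves(G_{\Pi_L})$ by item 2, contains $\overline{f_i}$. If no $B^b_i$ were a leaf, then no positive $f_i$ would ever enter the derivation, so $\overline{f_i}$ would survive into $C$, contradicting $f_i \notin \var(C)$. Hence $B^b_i$ is a leaf for some $b$, and the invariant gives $M^{x_i} \in \{0,1\}$. The main obstacle is item 3, which must simultaneously leverage tautology-freeness of resolvents, the survival of $f_j$ in $C$, and the precise structure of the $\mathcal{B}$-axioms to exclude an entire family of leaves; once this exclusion and the leaf-based invariant are in place, items 1, 2, and 4 follow by direct bookkeeping.
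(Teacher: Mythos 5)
Your treatment of items 1, 2 and 4 is correct and matches the paper's own argument; your leaf-based invariant (that $M^{x_i}=b$ exactly when $B^b_i$ is a leaf of $G_{\Pi_L}$ and $B^{1-b}_i$ is not, both never being leaves simultaneously since the pivots in $F$ make the merge operation unavailable) is a clean formalization of what the paper does implicitly, and you correctly spotted that the conclusion of item 4 should read $M^{x_i}\in\{0,1\}$ rather than $M^{x_j}\in\{0,1\}$, which is exactly what the paper's proof establishes.

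There is, however, a genuine gap in your item 3. Your exclusion of the leaves $B^b_k$ with $k<j$ rests on the claim that ``$f_j\in C$ means no resolution on pivot $f_j$ ever occurred.'' That implication is false for dag-like derivations: the literal $\overline{f_j}$ injected by $B^b_k$ can perfectly well be eliminated by resolving on $f_j$ against a clause derived from $\mathcal{B}_j$, while a \emph{different} branch of the dag reintroduces the positive literal $f_j$ and carries it into $C$. (Compare with your own item 4, where the valid direction is used: if no axiom with the \emph{positive} literal is ever a leaf, the negative literal cannot be eliminated. You cannot run this backwards from ``the positive literal survives to $C$.'') The paper closes this hole differently: let $k$ be the \emph{least} index with $\mathcal{B}_k\cap\leaves(G_{\Pi_L})\neq\emptyset$; then $\overline{f_k}$ occurs in no used axiom at all (it appears only in $\mathcal{B}_m$ for $m<k$ and in $\mathcal{A}$), so the positive literal $f_k$ can never be a resolution pivot and must survive into $C$. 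Since $C$ is Horn with unique positive literal $f_j$, this forces $k=j$, and the rest of your item 3 then goes through. So the lemma is salvageable, but as written the key exclusion step does not hold.
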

\begin{proof}
	\begin{enumerate}
		\item Since $\UsedConstraintInd(\Pi_L) = \emptyset$,
		$\var(C) \subseteq \var_\exists(\mathcal{B})=F$.
		
		All pivots in $\Pi_L$ are from $F$, and all universal variables are
		left of $F$ in the quantifier prefix. So no step in $\Pi_L$ can use
		the merge operation to update merge maps; all steps in $\Pi_L$ use
		only the select operation, which does not create any branching.
		
		\item By \Cref{lem:posFliterals-and-uci}, $\card{C\cap F^1}=1$,
		so there is a unique $j$ with the literal $f_j \in C$. This literal
		appears only in the clauses of $\mathcal{B}_j$, both of which create
		a non-trivial strategy for $x_j$. So $M^{x_j} \neq *$. By
		item~(\ref{item:empty-uci-vars}) proven above, $M^{x_j} \in
		\{0,1\}$.
		
		\item 
		Let $k$ be the least index such that $\Pi_L$ uses an axiom from
		$\mathcal{B}_k$. Since the positive literal $f_j$ is in $C$ and
		appears only in $\mathcal{B}_j$, $k \le j$. Assume $k < j$.  The
		axiom from $\mathcal{B}_k$ introduces the positive literal $f_k$
		into $\Pi_L$, and by choice of $k$, no axiom in $\Pi_L$ has the
		literal $\overline{f_k}$. Hence $f_k$ cannot be removed by
		resolution, and so $f_k\in C$, contradicting the fact that $C$ is
		Horn. So in fact $k=j$. This means that no axiom introduces the
		variables $f_i$, $i< j$, into $\Pi_L$, so $f_i\not\in\var(C)$.
		Furthermore, amongst all the axioms in $\mathcal{B}$, only the
		axioms in $\mathcal{B}_i$ have a non-trivial merge map for
		$x_i$. Hence for $i< j$, no non-trivial merge map for $x_i$ is
		created.
		
		\item   Since $f_j\in C$, $\Pi_L$ uses an axiom from $\mathcal{B}_j$. This
		axiom introduces the literals $\overline{f_i}$, for $j < i \le n$,
		into $\Pi_L$.
		
		If $\overline{f_i}$ is removed (by resolution) in $\Pi_L$, then an
		axiom from $\mathcal{B}_i$ must be used to introduce the positive
		literal $f_i$. This axiom created a non-trivial merge map for $x_i$,
		so the merge map for $x_i$ at $L$ is also non-trivial. \qedhere
	\end{enumerate}
	
\end{proof}

\begin{lemma}  
  \label{lem:nonempty-uci-structure}  
  Let $L = (C, \{M^{x_i} \mid i \in [n]\})$ be a line of $\Pi$ with
  $\UsedConstraintInd(\Pi_L) \neq \emptyset$.  Then
  $\UsedConstraintInd(\Pi_{L})$ is an interval $[a,b]$ for some $0 \le
  a \le b \le n$. Furthermore, {\upshape(}in the items below, $a,b$ refer to the
  endpoints of this interval\,{\upshape)}, it 
  has the following properties:  
\begin{enumerate}
\item \label{item:no-*-in-uci}
  For $k\in[n]\cap [a,b]$, $M^{x_k}\neq *$.
\item \label{item:start-uci-positive-literals}
  If $a \ge 1$, then $\card{\{d_a,e_a\}\cap C}=1$.
  If $a=0$, then $C$ does not have any positive literal.
\item \label{item:end-uci-negated-literals}
  If $b < n$, then $\overline{d_{b+1}}, \overline{e_{b+1}} \in C$.
\item \label{item:outside-uci-mergemaps}
  For all $k \in [n]\setminus[a,b]$, {\upshape(i)}~$d_k, e_k \not\in \var(M^{x_{k}})$, and 
  {\upshape(ii)}~if $M^{x_k} = *$ then $\overline{f_k} \in C$.
\end{enumerate}
\end{lemma}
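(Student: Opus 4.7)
I would structure the proof in stages: first the interval property, then items 1, 3, 4, and finally (as the most delicate case) item 2. The argument leverages two structural features of the $\KBKFlq[n]$ axioms. First, $\var_\exists(\mathcal{A}_i)\subseteq\{d_i,e_i,d_{i+1},e_{i+1}\}\cup F$ (with boundary adjustments at $i=0$ and $i=n$), while $\var_\exists(\mathcal{B})\subseteq F$; so the $D,E$-variables are cleanly blocked by index. Second, $A_0$ is purely negative and every other axiom is strict Horn (exactly one positive existential literal).

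For the interval property, I argue by contradiction. If $\UsedConstraintInd(\Pi_L)$ spans some gap $k$, take the earliest line $L^*=\res(L'',L''',v)$ in $\Pi_L$ whose $\UsedConstraintInd$ contains elements on both sides of $k$. By minimality and because $k\notin\UsedConstraintInd(\Pi_L)$, we have $\UsedConstraintInd(\Pi_{L''})\subseteq[0,k-1]$ and $\UsedConstraintInd(\Pi_{L'''})\subseteq[k+1,n]$, both non-empty. The block structure then forces $\var_\exists(C'')\cap\var_\exists(C''')\subseteq F$, so the pivot $v$ must be some $f_j$. But Lemma~\ref{lem:posFliterals-and-uci} (item 2) asserts that neither $C''$ nor $C'''$ contains a positive $F$-literal, precluding resolution on $f_j$. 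Contradiction.

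For items 1, 3, and 4, I proceed by direct source analysis. Item 1: each axiom in $\mathcal{A}_k$ (for $k\in[n]$) creates a simple non-trivial merge map for $x_k$, and the $\MRes$ rules preserve non-triviality once set. Item 3: $\overline{d_{b+1}},\overline{e_{b+1}}$ are introduced by the used $\mathcal{A}_b$ axioms and cannot be resolved away, since their only resolution partners (positive $d_{b+1},e_{b+1}$) appear exclusively in the unused $\mathcal{A}_{b+1}$. Item 4(i): for $d_k$ or $e_k$ to appear in $\var(M^{x_k})$, it must have been a resolution pivot, requiring $\mathcal{A}_k$ to be used---contradicting $k\notin[a,b]$. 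Item 4(ii): the only sources of non-trivial $M^{x_k}$ are $\mathcal{A}_k\cup\mathcal{B}_k$; when $M^{x_k}=*$, neither is used, so positive $f_k$ (appearing only in $\mathcal{B}_k$) is never introduced, while at least one $\mathcal{A}$-axiom (each containing $\overline{f_k}$) is used, forcing $\overline{f_k}\in C$.

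The main obstacle is item 2. For $a=0$, I exploit the Horn structure: $A_0$ is purely negative and every other axiom is strict Horn, so an induction along the derivation graph shows that every descendant of $A_0$ has a clause free of positive literals (two $A_0$-descendants cannot be resolved directly, since a pivot would require opposite polarities that are absent from both). For $a\ge 1$, I first show that at most one of $A^d_a, A^e_a$ is used. If both were used, then some resolution step in $\Pi_L$ would have to combine sub-derivations where $M^{x_a}$ takes the constant values $0$ and $1$ respectively---neither trivial nor isomorphic. The $\MRes$ rules then force the pivot to precede $x_a$ in the prefix, i.e., to lie in $\{d_1,e_1,\ldots,d_a,e_a\}$. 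But variables $d_j, e_j$ for $j<a$ do not appear in either sub-derivation (whose $\UsedConstraintInd$ lies in $[a,b]$ by the established interval property), while $d_a, e_a$ appear only positively and only in their respective sub-derivation. Hence no such pivot exists, so exactly one of $A^d_a, A^e_a$ is used; the contributed $d_a$ or $e_a$ then persists in $C$ because $\mathcal{A}_{a-1}$ (the only source of $\overline{d_a},\overline{e_a}$) is not used.
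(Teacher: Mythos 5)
Your proof is correct and follows essentially the same route as the paper's: the interval property via the earliest line straddling the gap and the impossibility of an $F$-pivot (using Lemma~\ref{lem:posFliterals-and-uci}), items 1, 3, 4 by tracking which axioms can introduce or eliminate the relevant literals and merge-map variables, and the $a=0$ case of item 2 by propagating purely-negative clauses through Horn resolvents. The one genuine divergence is item 2 for $a\ge 1$: the paper argues entirely at the clause level --- Horn-ness of $C$ gives $\card{\{d_a,e_a\}\cap C}\le 1$, and since some axiom of $\mathcal{A}_a$ is used and $\mathcal{A}_{a-1}$ is not, an introduced positive literal $d_a$ or $e_a$ cannot be resolved away, giving $\ge 1$. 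You instead first establish that at most one of $A^d_a,A^e_a$ is used, via a merge-map blocking argument (no pivot in $\Pi_L$ precedes $x_a$, so all merge maps for $x_a$ in $\Pi_L$ stay simple and two distinct constants could never be reconciled). Your version is valid --- and it proves the slightly stronger statement that only one of the two axioms is used, which is in the spirit of what the paper later needs in Lemma~\ref{lem:boundary-strategies} --- but it is more machinery than necessary here; the paper's two-line Horn argument suffices. If you keep your version, spell out explicitly that the absence of pivots left of $x_a$ in $\Pi_L$ is what forces every merge map for $x_a$ there to remain a constant, since that is the premise on which the ``constant $0$ meets constant $1$'' contradiction rests.
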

\begin{proof}
	Assume to the contrary that $\UsedConstraintInd(\Pi_L)$ is not an
	interval. Then there exist $0\le a < c < b \le n$ such that $a,b \in
	\UsedConstraintInd(\Pi_L)$ but $c \not\in
	\UsedConstraintInd(\Pi_L)$. Let $L_1$ be the first line in $\Pi_{L}$
	such that $\UsedConstraintInd(\Pi_{L_1})$ intersects both $[0,c-1]$
	and $[c+1,n]$ (note that $L_1$ exists). Since leaves have singleton
	$\UsedConstraintInd$ sets, $L_1$ is not a leaf. Say $L_1 =
	\res(L_2,L_3,v)$. By our choice of $L_1$, exactly one each of
	$\UsedConstraintInd(\Pi_{L_2})$ and $\UsedConstraintInd(\Pi_{L_3})$ is
	a non-empty subset of $[0,c-1]$ and of $[c+1,n]$.
	So $v \in \var_{\exists}(\mathcal{A}_{[0,c-1]})$ and $v \in
	\var_{\exists} (\mathcal{A}_{[c+1,n]})$.  But
	$\var_{\exists}(\mathcal{A}_{[0,c-1]}) \cap
	\var_{\exists}(\mathcal{A}_{[c+1,n]}) = F$, and by
	\Cref{lem:posFliterals-and-uci}, both $C_2$ and $C_3$ contain
	variables of $F$ only in negated form. So no variable from $F$ can be
	a resolution pivot, a contradiction. It follows that
	$\UsedConstraintInd(\Pi_L)$ is an interval.
	
	\begin{enumerate}
		\item For $k\in[n]\cap[a,b]$, some axiom from $\mathcal{A}_k$ has been
		used to derive $L$. Both these axioms create non-trivial strategies
		for $x_k$. Subsequent $\MRes$ steps cannot make a non-trivial
		strategy trivial.
		
		\item Consider first the case $a \ge 1$.  Since $C$ is a Horn clause,
		$C$ can contain at most one of the literals $d_a,e_a$.
		
		Since $a \in \UsedConstraintInd(\Pi_L)$, at least one of $A^d_a,
		A^e_a$ appears in $\leaves(\Pi_{L})$, so at least one of the
		literals $d_a,e_a$ is introduced into $\Pi_L$.  Since $A^d_{a-1}$
		and $A^e_{a-1}$ are the only axioms that contain $\overline{d_a}$ or
		$\overline{e_a}$, and since neither of these is used in $\Pi_L$,
		therefore the positive literals $d_a,e_a$, if introduced, cannot be
		removed through resolution. Hence at least one of them is in $C$.
		It follows that $C$ has exactly one of $d_a,e_a$.
		
		If $a=0$, $\Pi_L$ uses the clause $A_0$ which has only negative
		literals. The resolvent of such a clause and a Horn clause also has
		only negative literals. Following the sequence of resolutions on the
		path from a leaf using $A_0$ to $C$ shows that $C$ has only negative
		literals.
		
		\item Since $b<n$ and $b \in \UsedConstraintInd(\Pi_L)$, some clause
		from $\mathcal{A}_b$ is used in $\Pi_L$ and introduces the literals
		$\overline{d_{b+1}}, \overline{e_{b+1}}$ into $\Pi_L$. Since $b+1
		\not\in \UsedConstraintInd(\Pi_L)$, no leaf of $\Pi_{L}$ contains
		the positive literals $d_{b+1},e_{b+1}$. So $\overline{d_{b+1}}$
		and $\overline{e_{b+1}}$ cannot be removed through resolution.
		
		\item For $k>b$, no leaf in $\Pi_L$ contains the positive literals
		$d_k,e_k$. For $k < a$, no leaf in $\Pi_L$ contains the negative
		literals $\overline{d_k},\overline{e_k}$.  Thus, for
		$k\not\in[a,b]$, the variables $d_k,e_k$ are not used as resolution
		pivots anywhere in $\Pi_L$, and hence are not queried in any of the
		merge maps.
		
		Each negative literal $\overline{f_k}$ is present in every clause of
		$\mathcal{A}$, and hence is introduced into $\Pi_L$.  If $M^{x_k} =
		*$, then $B^0_k, B^1_k \not\in \leaves(\Pi_{L})$ (both of them have
		non-trivial merge maps for $x_k$). Since these are the only clauses
		with the positive literal $f_k$, the literal $\overline{f_k}$ cannot
		be removed in $\Pi_L$; hence $\overline{f_k} \in C$.
		\qedhere
	\end{enumerate}
\end{proof}

\begin{lemma}
  \label{lem:self-dependent-strategy-suffix-interval}
  For any line $L=(C,\{M^{x_i}\mid i\in[n]\})$ in $\Pi$, and any
  $k\in[n]$, if $\{d_k,e_k\} \cap \var(M^{x_k}) \neq \emptyset$, then 
  $\UsedConstraintInd(\Pi_{L}) =[a,n]$ for some $a\le k-1$.
\end{lemma}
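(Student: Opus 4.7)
My plan is to pin down the resolution step that introduced the query on $d_k$ (or $e_k$) into $M^{x_k}$ and argue from its structure. Assume without loss of generality that $d_k \in \var(M^{x_k})$; the case $e_k \in \var(M^{x_k})$ is symmetric. Since the axioms of $\KBKFlq[n]$ yield only simple merge maps (with values in $\{\ast, 0, 1\}$), any instruction in $M^{x_k}$ that queries $d_k$ must have been created by an application of the complex-merge case (the third option) of the $\MRes$ merge rule at some intermediate line $L^{\dagger} \in \Pi_L$. This immediately identifies $d_k$ as the pivot at $L^{\dagger}$.

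From $L^{\dagger}$ I would read off two indices that necessarily lie in $\UsedConstraintInd(\Pi_L)$. The parent clause at $L^{\dagger}$ carrying the literal $d_k$ must have inherited that literal from the unique axiom with positive $d_k$, namely $A^d_k \in \mathcal{A}_k$; hence $k$ lies in that parent's $\UsedConstraintInd$. The other parent carries $\overline{d_k}$, which can only come from $A_0$ (when $k=1$) or from a clause of $\mathcal{A}_{k-1}$ (when $k \ge 2$); thus $\max(k-1,0)$ lies in its $\UsedConstraintInd$. Since $\UsedConstraintInd$ is monotone under extending a sub-derivation, both indices lie in $\UsedConstraintInd(\Pi_L)$. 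Invoking Lemma~\ref{lem:nonempty-uci-structure}, $\UsedConstraintInd(\Pi_L)$ is a single interval $[a,b]$, and so $a \le k-1$ and $b \ge k$.

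The remaining and main obstacle is to show $b = n$. My plan is to argue by contradiction: suppose $b < n$. Then Lemma~\ref{lem:nonempty-uci-structure}(\ref{item:end-uci-negated-literals}) forces $\overline{d_{b+1}}, \overline{e_{b+1}} \in C$. I would then construct a total assignment $\alpha$ to $X$ that falsifies $C$ and such that $(\alpha, h_L(\alpha))$ satisfies every axiom used in $\Pi_L$, contradicting Lemma~\ref{ref:soundness}. The central trick is to exploit the self-dependence of $M^{x_k}$: since $h_L(\alpha)$'s value at $x_k$ is determined by $\alpha(d_k)$, an appropriate choice of $\alpha(d_k)$ lets us steer $x_k$ to satisfy the Horn axioms $A^d_k$ and $A^e_k$. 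The delicate part — and the technical crux — will be to simultaneously satisfy the remaining axioms $A^d_j, A^e_j$ for $j \in [a,b]\setminus\{k\}$ while respecting the literals already forced in $C$ by Lemma~\ref{lem:nonempty-uci-structure}(\ref{item:start-uci-positive-literals},\ref{item:end-uci-negated-literals}) and by any $\mathcal{B}$-axioms used along $\Pi_L$; for $j \ne k$ the merge map $M^{x_j}$ may be trivial, so one cannot rely on $x_j$ to rescue an axiom, and all such axioms must be made true via the $D \cup E \cup F$ literals alone.
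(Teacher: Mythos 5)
Your first step is correct and matches the paper: locating the resolution step on pivot $d_k$ (or $e_k$) that created the query in $M^{x_k}$, and tracing the positive and negative occurrences of that pivot back to $\mathcal{A}_k$ and to $\mathcal{A}_{k-1}$ (or $\mathcal{A}_0$) respectively, yields $a\le k-1$ and $b\ge k$. The gap is in the main step $b=n$. The statement you are trying to prove there is \emph{not} a consequence of the local soundness invariant of Lemma~\ref{ref:soundness} applied to the sub-derivation $\Pi_L$: there exist perfectly valid $\MRes$-derivable lines with $\UsedConstraintInd(\Pi_L)=[a,b]$, $b<n$, and a self-dependent $M^{x_k}$, for which \emph{every} assignment falsifying $C$ falsifies some used axiom. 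Concretely, download $A_0$ and $A^d_1$ and resolve on $d_1$, choosing for $x_1$ the complex-merge option (legal, since $d_1$ precedes $x_1$ in the prefix). The resulting line has $C=\{\overline{e_1},\overline{d_2},\overline{e_2},\overline{f_1},\dots,\overline{f_n}\}$, $\UsedConstraintInd(\Pi_L)=[0,1]$ (so $b=1<n$ for $n\ge 2$), and $M^{x_1}$ queries $d_1$, outputting $0$ when $d_1=0$ and $*$ otherwise. Any $\alpha$ falsifying $C$ sets $e_1=d_2=e_2=f_1=\dots=f_n=1$; if $\alpha(d_1)=1$ it falsifies $A_0$, and if $\alpha(d_1)=0$ then $h(\alpha)$ sets $x_1=0$ and $(\alpha,h(\alpha))$ falsifies $A^d_1$. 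So the assignment you plan to construct simply does not exist, and no amount of care in the ``delicate part'' will produce it. (Your phrase ``total assignment to $X$'' is also a slip: here $X$ is the universal block, and the assignment must be to $D\cup E\cup F$.)

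What actually excludes such lines from a complete \emph{refutation} is a downstream, syntactic obstruction, and this is how the paper argues. Assuming $b<n$, Lemma~\ref{lem:nonempty-uci-structure} forces $\overline{d_{b+1}},\overline{e_{b+1}}$ into $C$, and these literals must eventually be resolved away on any path from $L$ to the final line $L_\Box$. At the first such step the partner line has $\UsedConstraintInd$ equal to an interval starting at $b+1$, hence by Lemma~\ref{lem:nonempty-uci-structure} its merge map for $x_k$ does not mention $d_k,e_k$; since the map carried along the path does, and the pivot $d_{b+1}$ does not precede $x_k$, the step is legal only if the partner's map for $x_k$ is trivial, which in turn forces $\overline{f_k}$ into the resolvent; the later resolution on $f_k$ is then blocked because the $f_k$-positive side carries a non-trivial constant map for $x_k$ (Lemma~\ref{lem:empty-uci-structure}) that cannot be isomorphic to the complex one on the path. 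To repair your proof you would need to replace the soundness argument by something of this downstream flavour; the upstream sub-derivation alone does not contain the contradiction.
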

\begin{proof}
	Since $\{d_k,e_k\} \cap \var(M^{x_k}) \neq \emptyset$, either $d_k$
	or $e_k$ must be used as a pivot in $\Pi_L$, and hence must appear
	in both polarities in $\Pi_L$. The variables $d_k,e_k$ appear
	positively only in $\mathcal{A}_k$, and negatively only in
	$\mathcal{A}_{k-1}$. Hence $a \le k-1$. 
	
	Suppose $b < n$. By \Cref{lem:nonempty-uci-structure}
	(\ref{item:end-uci-negated-literals}), both $\overline{d_{b+1}}$ and
	$\overline{e_{b+1}}$ are in $C$. Consider any path $\rho$ in $\Pi$
	from $L$ to the final line $L_\Box$. At every line on this path, the
	merge map for $x_k$ queries at least one of $d_k,e_k$ since it is
	at least as complex as the merge map $M^{x_k}$. Along this path, both
	$d_{b+1}$ and $e_{b+1}$ must appear as pivots, since the negated
	literals are eventually removed. Pick the first such step on $\rho$,
	and assume without loss of generality that the pivot is $d_{b+1}$
	(the other case is symmetric). So $\overline{d_{b+1}}$ is present in
	the line, say $L_1$, on $\rho$, and $d_{b+1}$ is present in the
	clause $L_2$ with which it is resolved to obtain $L_3 =
	\res(L_2,L_1,d_{b+1})$ on $\rho$. By
	\Cref{lem:nonempty-uci-structure}~(\ref{item:start-uci-positive-literals}),
	$\UsedConstraintInd(\Pi_{L_2}) =[b+1,b']$ for some $b' \ge b+1$.
	Hence by
	\Cref{lem:nonempty-uci-structure}~(\ref{item:outside-uci-mergemaps}),
	$d_k,e_k \not\in \var(M_2^{x_k})$.  However, $\{d_k,e_k\} \cap
	\var(M_1^{x_k}) \neq \emptyset$. Since this resolution on
	$d_{b+1}$ is not blocked, it must be the case that $M_2^{x_k} =
	*$. Hence, by
	\Cref{lem:nonempty-uci-structure}~(\ref{item:outside-uci-mergemaps}),
	$\overline{f_k} \in C_2$ and so $\overline{f_k} \in C_3$.  To remove
	this literal, at some later point along $\rho$, $f_k$ must appear as
	pivot. However, at that point, the line from $\rho$ has a complex
	merge map for $x_k$, while the line with the positive literal $f_k$
	has a non-trivial constant merge map (by
	\Cref{lem:empty-uci-structure}~(\ref{item:empty-uci-positive-map})). Hence the resolution on $f_k$
	is blocked, a contradiction. It follows that $b=n$. 
\end{proof}

\begin{lemma}
\label{lem:S'-strategies-nontrivial}
  For all $L \in \mathcal{S}$, for all $k \in [n]$, $M^{x_{k}}
  \neq *$. 
\end{lemma}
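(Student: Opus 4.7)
\medskip

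The plan is to combine the structural lemmas for $\UsedConstraintInd$ already established (Lemmas~\ref{lem:posFliterals-and-uci} and \ref{lem:nonempty-uci-structure}) with the defining property of $\mathcal{S}'$, namely that clauses in $\mathcal{S}'$ contain no variable from $F$ at all (neither positive nor negative literals).

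First, I would fix an arbitrary $L = (C,\{M^{x_i} \mid i\in[n]\}) \in \mathcal{S}'$. By definition of $\mathcal{S}'$ we have $\var(C) \cap F = \emptyset$, so in particular $C \cap F^1 = \emptyset$. Lemma~\ref{lem:posFliterals-and-uci} then forces $\UsedConstraintInd(\Pi_L) \neq \emptyset$, and Lemma~\ref{lem:nonempty-uci-structure} tells us that $\UsedConstraintInd(\Pi_L) = [a,b]$ for some $0 \le a \le b \le n$.

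Next, I would split on whether the index $k \in [n]$ lies inside the interval $[a,b]$. For $k \in [n] \cap [a,b]$, Lemma~\ref{lem:nonempty-uci-structure}(\ref{item:no-*-in-uci}) directly yields $M^{x_k} \neq *$. For $k \in [n] \setminus [a,b]$, I argue by contradiction: if $M^{x_k} = *$, then Lemma~\ref{lem:nonempty-uci-structure}(\ref{item:outside-uci-mergemaps})(ii) gives $\overline{f_k} \in C$, contradicting $\var(C) \cap F = \emptyset$. Hence $M^{x_k} \neq *$ in this case too.

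There is no real obstacle here; the lemma is essentially a direct corollary of the structural results already proved, once one observes that the stronger hypothesis $\var(C) \cap F = \emptyset$ (rather than just $C \cap F^1 = \emptyset$) is what rules out the $M^{x_k} = *$ possibility outside the interval $[a,b]$. The only point that requires a moment's care is checking that the clauses in $\mathcal{S}'$ genuinely forbid negative $F$-literals as well, which is immediate from the definition given at the start of the section.
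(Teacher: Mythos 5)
Your proof is correct, and it reaches the conclusion by a slightly different route than the paper. The paper's proof of Lemma~\ref{lem:S'-strategies-nontrivial} does not invoke Lemma~\ref{lem:nonempty-uci-structure} at all: after using Lemma~\ref{lem:posFliterals-and-uci} to get $\UsedConstraintInd(\Pi_L)\neq\emptyset$ exactly as you do, it argues uniformly for every $k\in[n]$ that some clause of $\mathcal{A}$ is a leaf of $\Pi_L$, that every such clause contains all of $F^0$, hence $\overline{f_k}$ is introduced and (since $\var(C)\cap F=\emptyset$) must be cancelled by an axiom of $\mathcal{B}_k$, both of which carry a non-trivial merge map for $x_k$; non-triviality then persists to $L$. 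You instead split on whether $k$ lies in the interval $[a,b]=\UsedConstraintInd(\Pi_L)$ and quote Lemma~\ref{lem:nonempty-uci-structure}~(\ref{item:no-*-in-uci}) inside the interval and the contrapositive of Lemma~\ref{lem:nonempty-uci-structure}~(\ref{item:outside-uci-mergemaps})(ii) outside it. Both steps are legitimate, and your observation that the full hypothesis $\var(C)\cap F=\emptyset$ (not merely $C\cap F^1=\emptyset$) is what kills the outside-the-interval case is exactly the right point of care. What your version buys is economy: the lemma becomes a two-line corollary of machinery already proved (indeed, the proof of item~(\ref{item:outside-uci-mergemaps})(ii) in the paper is essentially the contrapositive of the paper's direct argument here). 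What the paper's version buys is a self-contained, uniform explanation of \emph{why} every $M^{x_k}$ is forced to be non-trivial, without needing the interval structure; it also avoids the (harmless) asymmetry that your two cases rest on two different mechanisms ($\mathcal{A}_k$ axioms inside the interval, $\mathcal{B}_k$ axioms outside).
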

\begin{proof}
  Consider a line $L=(C,\{M^{x_i}\mid i\in[n]\})\in\mathcal{S}$.
  Since $L \in \mathcal{S}$, it has no variables from $F$. So
  $C\cap F^1 = \emptyset$. (Recall that $F^1$ is the set of positive literals with variables from $F$; that is,
  $F^1 = \{f_1, f_2, \ldots, f_n\}$. Similarly, $F^0 = \{\overline{f_1}, \overline{f_2}, \ldots, \overline{f_n}\}$ is the set of negative literals over $F$.) By
  \Cref{lem:posFliterals-and-uci},
  $\UsedConstraintInd(\Pi_L) \neq \emptyset$. Since every clause in
  $\mathcal{A}$ contains all literals in $F^0$, for each $k\in[n]$,
  $\Pi_L$ has a leaf where the clause contains $\overline{f_k}$. This
  literal is removed in deriving $L$, so $\Pi_L$ also has a leaf where
  the clause contains the positive literal $f_k$. That is, it uses an
  axiom from $\mathcal{B}_k$; this leaf has a non-trivial merge map
  for $x_k$. Since a step in $\MRes$ cannot make a non-trivial merge
  map trivial, the merge map for $x_k$ at $L$ is non-trivial.
\end{proof}

\begin{lemma}
\label{lem:boundary-strategies}
For all $L \in \partial{\mathcal{S}}$, for all $k \in [n]$,
$d_k, e_k \not\in \var(M^{x_{k}})$. 
\end{lemma}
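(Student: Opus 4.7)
My plan is to argue by contradiction, supposing $d_k\in\var(M^{x_k})$ for some $k\in[n]$ (the $e_k$ case being symmetric), and first to pin down the structure of the derivation step producing $L$. Since $L\in\mathcal{S}$ while both parents lie outside $\mathcal{S}'$, both parent clauses contain $F$-literals while the resolvent $C$ does not, forcing the pivot to be some $f_j\in F$; say $f_j\in C_1$ and $\overline{f_j}\in C_2$. The Horn property together with $C\cap F=\emptyset$ then pins down $C_1\cap F=\{f_j\}$ and $C_2\cap F=\{\overline{f_j}\}$. Applying \cref{lem:posFliterals-and-uci} and \cref{lem:empty-uci-structure} to $L_1$ yields $\UsedConstraintInd(\Pi_{L_1})=\emptyset$ and every $(M_1)^{x_i}\in\{*,0,1\}$, with $(M_1)^{x_i}=*$ for $i<j$ and $(M_1)^{x_i}\in\{0,1\}$ for $i\ge j$. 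Crucially, $f_j$ does not precede any universal variable, so the branching rule (iii) is unavailable at the step deriving $L$, and each $M^{x_i}$ must arise via rule (i) (trivial+select) or rule (ii) (isomorphic).

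Next, I would locate where the complexity of $M^{x_k}$ originates. Since $(M_1)^{x_k}$ is simple, rule (ii) would force $M^{x_k}$ to be simple also; hence rule (i) applies, with $(M_1)^{x_k}=*$ (which forces $k<j$) and $M^{x_k}=(M_2)^{x_k}$. Thus $(M_2)^{x_k}$ queries $d_k$. Applying \cref{lem:self-dependent-strategy-suffix-interval} to $L_2$ now gives $\UsedConstraintInd(\Pi_{L_2})=[a,n]$ with $a\le k-1$. Since $k<j\le n$, in particular $\mathcal{A}_i$ is used in $\Pi_{L_2}$ for every $i\in[k+1,n]$, and each such axiom introduces $\overline{f_k}$ into the derivation. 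The target is to derive a contradiction by showing $\overline{f_k}\in C_2$, which is incompatible with $C_2\cap F=\{\overline{f_j}\}$ since $j\ne k$.

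The key observation I would exploit is that eliminating $\overline{f_k}$ in any derivation requires a resolution on pivot $f_k$, whose $f_k$-positive parent has $\UsedConstraintInd=\emptyset$ by \cref{lem:posFliterals-and-uci} and hence $(M)^{x_k}\in\{0,1\}$ by \cref{lem:empty-uci-structure}. Since $f_k$ does not precede $x_k$, rule (iii) is unavailable at an $f_k$-resolution step, and rules (i) and (ii) both force the resulting $(M)^{x_k}$ to be a non-trivial constant. Consequently, once a line has a complex $(M)^{x_k}$, no descendant of that line can shed a $\overline{f_k}$. Combining this with the fact that complexity in $(M_2)^{x_k}$ can only be created by a branching step (rule (iii)) on pivot $d_k$ or $e_k$, which in turn requires $A^d_k$ or $A^e_k$ in its sub-derivation, a careful tracking through the DAG of $\Pi_{L_2}$ shows that the $\overline{f_k}$ contributions from the $\mathcal{A}_i$ for $i>k$ cannot all be eliminated, so $\overline{f_k}\in C_2$ as required.

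The main obstacle I foresee is this last combinatorial step: formally tracking the monotone growth of $(M)^{x_k}$-complexity through $\Pi_{L_2}$ alongside the multiple introductions of $\overline{f_k}$ from different $\mathcal{A}_i$ axioms, and showing that every attempted configuration of the DAG either traps $\overline{f_k}$ in $C_2$ or is blocked by the $\MRes$ merge-map rules. The essential invariant I would aim to prove is that a line with a complex $(M)^{x_k}$ can only be combined (via pivots not preceding $x_k$) with lines whose $(M)^{x_k}$ is trivial or isomorphic to its own, severely restricting what other clauses can be mixed in while the complexity is maintained and $\overline{f_k}$ is kept absent.
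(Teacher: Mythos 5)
Your setup faithfully reproduces the first half of the paper's argument: identifying the pivot as some $f_j\in F$ with $C_1\cap F=\{f_j\}$ and $C_2\cap F=\{\overline{f_j}\}$, deducing $\UsedConstraintInd(\Pi_{L_1})=\emptyset$, that $(M_1)^{x_k}=*$, that $M^{x_k}=(M_2)^{x_k}$ with $k<j$, and that $\UsedConstraintInd(\Pi_{L_2})=[a,n]$ with $a\le k-1$ via Lemma~\ref{lem:self-dependent-strategy-suffix-interval} -- all of this matches the paper. The divergence, and the genuine gap, is in your endgame: the target claim $\overline{f_k}\in C_2$ does not follow from your ``monotone complexity'' invariant, and it is in fact false. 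Your key observation (once $(M)^{x_k}$ is complex, no descendant can resolve on $f_k$) is correct, but it only traps occurrences of $\overline{f_k}$ that are still present \emph{when} the complexity is created. Nothing prevents every $\overline{f_k}$ from being eliminated earlier, while the $x_k$-maps on the $\mathcal{A}$-side are still constants matching the constant maps of the $\mathcal{B}_k$-derived clauses. Concretely (take $k=1$): resolve $A^d_1$ with $B^0_1$ on $f_1$ (both have $x_1$-map $0$, so the step is legal) to get a clause containing $d_1$ and no $\overline{f_1}$; resolve $A_0$ with $B^1_1$ on $f_1$ (trivial map against constant $1$) to get a clause containing $\overline{d_1}$ and no $\overline{f_1}$; now merge these on pivot $d_1$ to obtain a line whose $x_1$-map is ``if $d_1=0$ then $0$ else $1$'' -- complex in $d_1$ -- with no $\overline{f_1}$ in its clause. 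So a ``careful tracking through the DAG'' cannot establish $\overline{f_k}\in C_2$; the obstacle you flagged is not a technicality but a dead end.

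The paper escapes this by tracking a different literal and pushing the contradiction \emph{past} $L$ rather than trying to trap it inside $\Pi_{L_2}$. Since the pivot $f_j$ does not precede $x_j$ and $(M_1)^{x_j}\in\{0,1\}$, the map $(M_2)^{x_j}$ must be the same constant, so $\Pi_{L_2}$ uses exactly one of $A^d_j,A^e_j$ (using both would force a complex $x_j$-map). Because $a<j$, a clause of $\mathcal{A}_{j-1}$ introduces both $\overline{d_j}$ and $\overline{e_j}$, and the one whose positive counterpart is unused (say $\overline{e_j}$) survives into $C_2$ and hence into $C$. On any path from $L$ to $L_\Box$ this literal must eventually be resolved away; at that step the $\mathcal{S}'$-side has an $x_k$-map at least as complex as $M^{x_k}$ (so it queries $d_k$ or $e_k$), while the positive-$e_j$ side has $\UsedConstraintInd$ an interval starting at $j>k$, hence by Lemma~\ref{lem:nonempty-uci-structure} its $x_k$-map avoids $d_k,e_k$, yet by Lemma~\ref{lem:S'-strategies-nontrivial} it is non-trivial. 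The two maps are non-trivial and non-isomorphic, and the pivot does not precede $x_k$, so the resolution is blocked. If you want to repair your proof, you need this forward-looking step (or an equivalent); the backward-looking argument about $\overline{f_k}$ inside $\Pi_{L_2}$ cannot be made to work.
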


\begin{proof}  
  Consider a line $L \in \partial{\mathcal{S}}$; $L=(C,\{M^{x_i}\mid i\in[n]\})$.
  Assume to the contrary that for some $k\in[n]$, $\{d_k, e_k\} \cap
  \var(M^{x_{k}}) \neq \emptyset$.

  The line $L$ is obtained by performing resolution on two
  non-$\mathcal{S}$ clauses with a pivot from $F$. Let
  $L=\res(L',L'',f_\ell)$ for some $\ell\in [n]$; $f_\ell\in C'$ and
  $\overline{f_\ell}\in C''$. Since $L$ has no variable in $F$,
  $f_\ell$ is the only variable from $F$ in $\var(C')$ and
  $\var(C'')$.

  Since $C'$ has the literal $f_\ell \in F^1$, by
  \Cref{lem:posFliterals-and-uci},
  $\UsedConstraintInd(\Pi_{L'}) = \emptyset$ and $L'$ is derived
  exclusively from $\mathcal{B}$. Since $D\cup E$ and
  $\var(\mathcal{B})$ are disjoint, all the merge maps in $L'$ have no
  variable from $D\cup E$. So $M^{x_k}$ gets its $D\cup E$ variables
  from $(M'')^{x_k}$. Since this does not block the resolution step,
  $(M')^{x_k}$ must be trivial and $M^{x_k} = (M'')^{x_k}$.  Since
  $\var(C')\cap F = f_\ell$, by
  \Cref{lem:empty-uci-structure}~(\ref{item:empty-uci-positive-map}),(\ref{item:empty-uci-maps-left-of-positive}),(\ref{item:empty-uci-maps-right-of-positive}),
  $k < \ell$.

  The line $L''$ has no literal from $F^1$, so by
  \Cref{lem:posFliterals-and-uci},
  $\UsedConstraintInd(\Pi_{L''}) \neq \emptyset$. It has a merge map
  for $x_k$ involving at least one of $d_k,e_k$, so by
  \Cref{lem:self-dependent-strategy-suffix-interval},
  $\UsedConstraintInd(\Pi_{L''}) =[a,n]$ for some $a\le k-1$.
  Thus we have $a \le k-1 < k < \ell \le n$.

  Consider the resolution of $L'$ with $L''$. By
  \Cref{lem:empty-uci-structure}~(\ref{item:empty-uci-positive-map}),
  $(M')^{x_\ell} \in\{0,1\}$, and by
  \Cref{lem:nonempty-uci-structure}~(\ref{item:no-*-in-uci}),
  $(M'')^{x_\ell} \neq *$. To enable this resolution, $(M'')^{x_\ell}
  = (M')^{x_\ell}$. The clauses $A_\ell^d$ and
  $A_\ell^e$ give rise to different constant strategies for
  $x_\ell$. So the derivation of $L''$ uses exactly one of these two
  clauses. Assume it uses $A_\ell^d$; the other case is
  symmetric. %
  Since $a < \ell$,
  the derivation of $L''$ uses a clause from $A_{\ell-1}$,
  introducing literals $\overline{d_{\ell}}$ and
  $\overline{e_{\ell}}$. Since the only clause containing positive
  literal $e_\ell$ is not used, $\overline{e_{\ell}}$ survives in
  $C''$. Going from $L''$ to $L$ removes only $\overline{f_{\ell}}$,
  so $\overline{e_{\ell}}\in C$.
 
   To summarize, at this stage we know that $L\in \partial{\mathcal{S}}$,
   $\overline{e_\ell}\in C$, $\{d_k,e_k\}\cap \var(M^{x_k}) \neq
     \emptyset$, $M^{x_\ell}\in \{0,1\}$ and $1\le k < \ell \le n$.
  
  Fix any path $\rho$  in $G_\Pi$ from $L$  to $L_\Box$.
  Along this path, $e_\ell$ appears as the pivot somewhere, since the
  literal $\overline{e_\ell}$ is eventually removed. Consider the
  resolution step at that point, say $C_1=\res(C_2,C_3,e_\ell)$, with
  $C_3$ being the clause at the line on $\rho$. At the corresponding
  line $L_3$, the strategies are at least as complex as those at $L$.
  Hence $\var(M_3^{x_k})\cap \{d_k,e_k\} \neq \emptyset$. On the other
  hand, $C_2$ has the positive literal $e_\ell$. By
  \Cref{lem:nonempty-uci-structure}, for the corresponding line $L_2$,
  $\UsedConstraintInd(\Pi_{L_2}) = [\ell,c]$ for some $c \ge
  \ell$. Since $k <\ell $, by \Cref{lem:nonempty-uci-structure},
  $\{d_k,e_k \} \cap \var(M_2^{x_k})=\emptyset$. However, the path from
  $L_2$ to $L_1$ and thence to $L_\Box$ along $\rho$ witnesses that
  $L_2\in\mathcal{S}$, so by
  \Cref{lem:S'-strategies-nontrivial}, $M_2^{x_k}\neq *$.  Thus
  $M_2^{x_k}$ and $M_3^{x_k}$ are non-trivial but not isomorphic, and
  this blocks the resolution on $e_\ell$.

  Thus our assumption that $\{d_k,e_k\}\cap \var(M^{x_k}) \neq
  \emptyset$ must be false. The lemma is proved.
\end{proof}

We will also use the following property of $\KBKFlq$ formulas. It
implies that in every countermodel, the strategy for every variable
has self-dependence. This is used, towards the end of the proof of
\Cref{thm:mres-lb}, to show that merge-maps for countermodels must be complex and
large.
\begin{proposition}
	\label{prop:KBKF-strategies}
	Let $h$ be any countermodel for $\KBKFlq_n$.  Let $\alpha$ be any
	assignment to $D$, and $\beta$ be any assignment to $E$. For
	each $i\in [n]$, if $\alpha_j \neq \beta_j$ for all $1\le j
	\le i$, then $h^{x_i}\big((\alpha,\beta)\restriction_{L_Q(x_i)}\big) = \alpha_i$. In particular, if $\alpha_j\ne\beta_j$ for all $j\in[n]$,
	then the countermodel computes $h(\alpha,\beta)=\alpha$.
\end{proposition}
\begin{proof}%
	Let $h$ be any countermodel for $\KBKFlq_n$. For $i \in [n]$, let $\alpha^i$ be an assignment to $\{d_1, \ldots, d_i\}$, and $\beta^i$ be an assignment to $\{e_1, \ldots, e_i\}$. For $j \le i$, let $\alpha^i_j$ (resp.~$\beta^i_j$) be the assignment to $d_j$ (resp.~$e_j$) set by the assignment $\alpha^i_j$ (resp.~$\beta^i_j$). We will show that
	for each $i\in [n]$, if $\alpha^i_j \neq \beta^i_j$ for all $1\le j
	\le i$, then $h^{x_i}(\alpha^i,\beta^i) = \alpha^i_i$.
	This implies the claimed result.    
	
	Fix some $i \in [n]$. Assume to the contrary that $\alpha^i_j \neq \beta^i_j$ for all $1\le j \le i$ and $h^{x_i}(\alpha^i,\beta^i) \neq \alpha^i_i$. We will give a winning strategy for the existential player. Note that all clauses in $\mathcal{A}[0,i-1]$ are satisfied by the partial assignment $(\alpha^i,\beta^i)$. The existential player sets $d_j = e_j = 1$ for all $j > i$ and sets $f_j = 1$ for all $j \in [n]$. This satisfies all the remaining clauses, irrespective of the strategy of the universal player. Therefore the existential player wins. This contradicts the assumption that $h$ is a countermodel for $\KBKFlq_n$.
\end{proof}

Now we have all the required information; we put it together to obtain
the lower bound.
\begin{proof}[Proof of \Cref{thm:mres-lb}]
  Let $\Pi$ be a refutation of $\KBKFlq_n$ in $\MRes$. 
   Let $\mathcal{S},\partial{\mathcal{S}}$ be as defined in the beginning of
   this section.
   Let the final line of $\Pi$ be $L_\Box=(\Box,\{M_\Box^{x_i} \mid i\in
  [n]\})$, and for $i\in [n]$, let $h_i$ be the functions computed by
    the merge map $M_\Box^{x_i}$. By soundness of $\MRes$, the functions
    $\{h_i\}_{i\in[n]}$ form a countermodel for $\KBKFlq_n$.

  For each $a \in \{0,1\}^n$, consider the assignment $\alpha$ to the
  variables of $D\cup E$ where $d_i=a_i$, $e_i = \overline{a_i}$. Call
  such an assignment an anti-symmetric assignment.  Given such an
  assignment, walk from $L_\Box$ towards the leaves of $\Pi$ as far as
  is possible while maintaining the following invariant at each line
  $L = (C,\{ M^{x_i}\mid i\in[n]\})$ along the way:
  \begin{enumerate}
    \item $\alpha$ falsifies $C$, and
    \item for each $i\in[n]$,  $h_i(\alpha) = M^{x_i}(\alpha)$.
  \end{enumerate}
  Clearly this invariant is initially true at $L_\Box$, which is in
  $\mathcal{S}$. If we are currently at a line $L \in \mathcal{S}$
  where the invariant is true, and if $L\not\in \partial{\mathcal{S}}$, then $L$
  is obtained from lines $L'$, $L''$. The resolution pivot in this step
  is not in $F$, since that would put $L$ in $\partial{\mathcal{S}}$. So both
  $L'$ and $L''$ are in $\mathcal{S}$, and the pivot is in $D \cup
  E$. Let the pivot be in $\{d_\ell,e_\ell\}$ for some
  $\ell\in[n]$. Depending on the pivot value, exactly one of $C',C''$
  is falsified by $\alpha$; say $C'$ is falsified.  By
  \Cref{lem:S'-strategies-nontrivial}, for each $i\in[n]$, both
  $(M')^{x_i}$ and $(M'')^{x_i}$ are non-trivial. By definition of the
  $\MRes$ rule,
  \begin{itemize}
    \item For $i< \ell$, $(M')^{x_i}$ and $(M'')^{x_i}$ are isomorphic
      (otherwise the resolution is blocked),
      and $M^{x_i} = (M')^{x_i} = (M'')^{x_i}$.
    \item For $i\ge \ell$, there are two possibilities: \\
      (1)~$(M')^{x_i}$ and $(M'')^{x_i}$ are isomorphic, and
      $M^{x_i} = (M')^{x_i}$.  \\
      (2)~$M^{x_i}$ is a merge of $(M')^{x_i}$
      and $(M'')^{x_i}$ with the pivot variable queried. By definition
      of the merge operation, since $C'$ is falsified by $\alpha$,
      $M^{x_i}(\alpha) = (M')^{x_i}(\alpha)$.
  \end{itemize}
  Thus in all cases, for each $i$, $h_i(\alpha) = M^{x_i}(\alpha)=
  (M')^{x_i}(\alpha)$.
  Hence  $L'$ satisfies the   invariant.

  We have shown that as long as we have not encountered a line in 
  $\partial{\mathcal{S}}$, we can move further. We continue the walk until a
  line in $\partial{\mathcal{S}}$ is reached. We denote the line so reached by
  $P(\alpha)$.  Thus $P$ defines a map from anti-symmetric assignments
  to $\partial{\mathcal{S}}$.

  We now show that the map $P$ is one-to-one. Suppose, to the
  contrary,
  $P(\alpha) = P(\beta) = (C,\{M^{x_i} \mid i\in[n]\})$ for
  two distinct anti-symmetric assignments obtained from $a,b\in
  \{0,1\}^n$ respectively. Let $j$ be the least index in $[n]$ where
  $a_j \neq b_j$.  By \Cref{lem:boundary-strategies}, $M^{x_j}$
  depends only on $\{d_i,e_i \mid i < j\}$, and $\alpha,\beta$ agree
  on these variables. Thus we get the equalities
  $a_j = h_j(\alpha) = M^{x_j}(\alpha) = M^{x_j}(\beta) = h_j(\beta)= b_j$,
where the first and last equalities follow from
\Cref{prop:KBKF-strategies}, the third equality from 
\Cref{lem:boundary-strategies} and choice of $j$, and the second
and fourth equalities by the invariant satisfied at $P(\alpha)$ and
$P(\beta)$ respectively.
  This contradicts $a_j\neq b_j$.

  We have established that the map $P$ is one-to-one. Hence, 
  $\partial{\mathcal{S}}$ has at least as many lines as anti-symmetric
  assignments, so $\card{\Pi} \ge \card{\partial{\mathcal{S}}} \ge 2^n$.
\end{proof}

\section{Relations among proof systems}
In this section, we collect all the separations among proof systems which are implied by the lower bounds in \Cref{sec:lower bounds}.

Since any propositional formula is also a QBF formula and since $\MRes$ degenerates to Resolution on propositional formulas, it follows from propositional proof complexity that $\MRes$ strictly-simulates tree-like and regular $\MRes$, and that tree-like $\MRes$ does not simulate regular or general $\MRes$. Whether, regular $\MRes$ p-simulates tree-like $\MRes$ is unknown. 
Here we observe that the non-simulation of regular and general $\MRes$ by tree-like $\MRes$ is also witnessed by the $\QParity$ formulas (because the $\QParity$ formulas have polynomial-size refutations in regular and general $\MRes$ but require exponential-size refutations in tree-like $\MRes$).

\label{sec:proof-system-realtions}

The following two theorems show that $\MRes$ and its restrictions are incomparable with some other resolution-based QBF proof systems. As observed in \cite{DBLP:journals/jar/BeyersdorffBM21}, one direction of the non-simulation follows from the Equality formulas: these formulas have polynomial-size refutations in tree-like, regular and general $\MRes$ but require exponential-size refutations in $\QRes$, $\QURes$, $\QCP$, $\ExpRes$, and $\IR$.

\begin{theorem}
	\label{cor:tree-and-regular-MRes-incomparable-with-five-systems}
	Tree-like and regular $\MRes$ are incomparable with the tree-like and general versions
	of $\QRes$, $\QURes$, $\QCP$, $\ExpRes$, and
	$\IR$.
\end{theorem}
\begin{proof}
	We showed in \Cref{thm:treeMRes-CR} that the Completion Principle $\CR_n$ requires exponential-size refutations in tree-like $\MRes$. In \Cref{thm:CR-regular-lower-bound}, we showed that it requires exponential-size refutations in regular $\MRes$.
	It has polynomial-size refutations in tree-like QRes \cite{Janota-QRes-and-CDCL-SAT16} (and hence also in $\QURes$ and $\QCP$) and tree-like $\ExpRes$ \cite{Janota-Expansion-vs-QRes-TCS15} (and hence also in $\IR$). (While \cite{Janota-Expansion-vs-QRes-TCS15} does not explicitly mention tree-like or regular refutations, the refutation provided there for $\CR_n$ is tree-like and regular.)
	Therefore, tree-like and regular $\MRes$ do not simulate the tree-like and general versions of $\QRes$, $\QURes$, $\QCP$, $\ExpRes$, and $\IR$.

The other direction of the non-simulation follows from the Equality formulas, as mentioned in \Cref{ex:Equality}.
\end{proof}

\begin{theorem}
	\label{cor:MRes-incomparable}
	$\MRes$ is incomparable with $\QURes$ and $\QCP$.
\end{theorem}
\begin{proof}
	\Cref{thm:mres-lb} shows that the $\KBKFlq_n$ formula requires
	exponential-size refutations in $\MRes$.
	It has polynomial-size refutations in $\QURes$
	\cite{BWJ-SAT14}, and  also in $\QCP$ (since $\QCP$ simulates $\QURes$
	\cite{BCMS-IC18}).
	Therefore $\MRes$ does not simulate $\QURes$ and $\QCP$.
	The other direction of the non-simulation follows from
	the Equality formulas, as mentioned in \Cref{ex:Equality}. %
\end{proof}

\section{Conclusions and Future Work}
\label{sec:conclusions}
The proof system $\MRes$ was introduced in \cite{DBLP:journals/jar/BeyersdorffBM21}, using the
novel idea of building strategies directly into the proof and using
them to enable additional sound applications of resolution. In
\cite{DBLP:journals/jar/BeyersdorffBM21}, the strengths of the proof system were demonstrated.
In this paper, we complement that study by exposing some limitations
of $\MRes$. We obtain hardness for tree-like $\MRes$ by transferring
computational hardness of the countermodels in decision trees, and for
regular and general $\MRes$ by ad hoc combinatorial arguments.

Several questions still remain.
\begin{enumerate}
\item One of the driving goals behind the definition of $\MRes$ was
  overcoming a perceived weakness of $\LDQRes$: its criterion for
  blocking unsound applications of resolution also blocks several
  sound applications.  However, whether $\MRes$ actually overcomes
  this weakness is not  demonstrated, neither in \cite{DBLP:journals/jar/BeyersdorffBM21} nor here.
  In \cite{DBLP:journals/jar/BeyersdorffBM21},
  $\MRes$ is shown to be more powerful than the reductionless variant
  of $\LDQRes$ (introduced in \cite{BjornerJK15} and further investigated in \cite{PeitlSS19a,DBLP:journals/jar/BeyersdorffBM21}). Very recently, in \cite{MS-SAT22}, this question has been resolved; another variant of $\KBKF$ has been shown to be easy in $\MRes$ but exponentially hard for $\LDQRes$ and even for systems more powerful than $\LDQRes$. 
The other direction, whether there is a formula easy for $\LDQRes$ but hard for $\MRes$, is still open. One possible candidate for this separation might appear to be the original $\KBKF$ formula, which is easy for $\LDQRes$ \cite{ELW13} (that paper uses the name $\varphi_t$). However the KBKF formulas can be shown to have short refutations in $\MRes$ as well, and hence cannot be used for this purpose. Perhaps the completion principle formulas $\CR_n$ may demonstrate this separation. 
\item In the propositional case, regular resolution simulates
  tree-like resolution. This relation may not hold in the case of
  $\MRes$, and even if it does, it will need a different proof.  The
  trick used in the propositional case --- (i)~interpret the proof tree
  as a decision tree for search, (ii)~make the decision tree
  read-once, (iii)~then return from the search tree to a refutation,
  --- does not work here because when we prune away parts of the
  decision tree to get a read-once tree, we may end up destroying
  isomorphism of strategies of blocking variables. Perhaps modifying the proof system itself to require not isomorphism but only semantic equivalence, as was done in \cite{BPS-SAT21},  could lead to a simulation more easily, but that would be in the context of the modified proof system, not $\MRes$ itself.
\end{enumerate}

\section{Acknowledgements}
Olaf Beyersdorff's research was supported by grants from the John Templeton Foundation (grant no.\ 60842), the DFG (BE 4209/3-1), and the Carl Zeiss Foundation. Tom\'{a}\v{s} Peitl's research was supported by Grant J-4361 of the Austrian Science Fund FWF. Olaf Beyersdorff and Meena Mahajan were supported by a DAAD/DST grant. Part of this work was done during the Dagstuhl Seminar `SAT and Interactions' (Seminar 20061).

\bibliography{mergeres,compl}

\end{document}